\algrenewcommand{\alglinenumber}[1]{\small #1:}
\newcommand{\algfontsize}{\footnotesize}
\newcommand{\msg}[1]{$\langle #1 \rangle$}
\newcommand{\mtag}[1]{\textit{#1}}
\newcommand{\proc}[1]{\textsf{#1}}
\newcommand{\KwTrue}{true}
\newcommand{\KwFalse}{false}
\newcommand{\KwNull}{null}
\algrenewcommand{\algorithmiccomment}[1]{// #1}
\newtheorem{Theorem}{Theorem}
\newtheorem{Lemma}{Lemma}
\newtheorem{Define}{Definition}
\title{A cooperative partial snapshot algorithm for checkpoint-rollback recovery of large-scale and dynamic distributed systems and experimental evaluations%
\protect\thanks{A preliminary version of this paper appeared in the proceedings of the Sixth International Symposium on Computing and Networking Workshops (CANDARW)\cite{Kim2018}.\\%
This is the peer reviewed version of the following article \cite{Nakamura2020}, which has been published in final form at https://doi.org/10.1002/cpe.5647. This article may be used for non-commercial purposes in accordance with Wiley Terms and Conditions for Use of Self-Archived Versions.}
}
\author[1]{Junya Nakamura\thanks{Corresponding author: junya[at]imc.tut.ac.jp}}
\author[2]{Yonghwan Kim}
\author[2]{Yoshiaki Katayama}
\author[3]{Toshimitsu Masuzawa}
\affil[1]{Toyohashi University of Technology, Japan}
\affil[2]{Nagoya Institute of Technology, Japan}
\affil[3]{Osaka University, Japan}
\begin{document}

\maketitle

\begin{abstract}
A distributed system consisting of a huge number of computational entities is prone to faults, because faults in a few nodes cause the entire system to fail.
Consequently, fault tolerance of distributed systems is a critical issue.
Checkpoint-rollback recovery is a universal and representative technique for fault tolerance;
it periodically records the entire system state (configuration) to non-volatile storage, and the system restores itself using the recorded configuration when the system fails.
To record a configuration of a distributed system, a specific algorithm known as a snapshot algorithm is required.
However, many snapshot algorithms require coordination among all nodes in the system; thus, frequent executions of snapshot algorithms require unacceptable communication cost, especially if the systems are large.
As a sophisticated snapshot algorithm, a partial snapshot algorithm has been introduced that takes a partial snapshot (instead of a global snapshot).
However, if two or more partial snapshot algorithms are concurrently executed, and their snapshot domains overlap, they should coordinate, so that the partial snapshots (taken by the algorithms) are consistent.
In this paper, we propose a new efficient partial snapshot algorithm with the aim of reducing communication for the coordination.
In a simulation, we show that the proposed algorithm drastically outperforms the existing partial snapshot algorithm, in terms of message and time complexity.
\end{abstract}

\section{Introduction}
A distributed system consists of computational entities (i.e., computers), usually called nodes, which are connected to each other by (communication) links.
Each node can communicate with the other nodes by exchanging messages through these links.
In large-scale distributed systems, node faults are inevitable, and the faults of only a few nodes (probably a single node) may cause the entire system to fail.
Therefore, the fault tolerance of distributed systems is a critical issue to ensure system dependability.

\emph{Checkpoint-rollback recovery} \cite{Koo1987} is a universal and representative method for realizing the fault tolerance of distributed systems.
Each node periodically (or when necessary) records its local state in non-volatile storage, from which the node recovers its past non-faulty state when faults occur.
This recorded state is called a \emph{checkpoint} and restoring the node state using its checkpoint is called a \emph{rollback}.
However, in distributed systems, to guarantee consistency after a rollback (i.e., a global state constructed from the checkpoints), nodes must cooperate with each other to record their checkpoints.
A configuration is \emph{inconsistent}\cite{Netzer1995,Fischer1982} if it contains an \emph{orphan message}, which is received but is not sent in the configuration.
To resolve the inconsistency, the receiver of the orphan message must restore an older checkpoint.
This may cause a domino effect\cite{Briatico1984} of rollbacks, which is an unbounded chain of local restorings to attain a consistent global state.
A consistent global state can be formed by every node's mutually concurrent local state (which means that there are no causal relationships between any two local states in the global state) and all in-transit messages.
A \emph{snapshot algorithm} is for recording a consistent global configuration called a \emph{snapshot} which ensures that all nodes record their checkpoints cooperatively.
Checkpoint-rollback recovery inherently contains a snapshot algorithm to record the checkpoints of the nodes, forming a consistent global state, and its efficiency strongly depends on that of the snapshot algorithm.

Many sophisticated snapshot algorithms have been proposed \cite{Spezialetti1986,Prakash1994,Elnozahy2002,Moriya2005,Kim2011}.
As the scale (the number of nodes) of a distributed system increases, the efficiency of the snapshot algorithm becomes more important.
Especially in a large-scale distributed system, frequent captures of global snapshots incur an unacceptable communication cost.
To resolve the problem of global snapshot algorithms, \emph{partial snapshot} algorithms have been proposed, which take a snapshot of some portion of a distributed system, rather than the entire system.
Most snapshot algorithms (whether global or partial) cannot deal with dynamic distributed systems where nodes can freely join and leave the system at any time.

In this paper, we propose a new cooperative partial snapshot algorithm which (a) takes a partial snapshot of the communication-related subsystem (called a \emph{snapshot group}), so its message complexity does not depend on the total number of nodes in the system; (b) allows concurrent initiations of the algorithm by two or more nodes, and takes a consistent snapshot using elaborate coordinations among the nodes with a low communication cost; and (c) is applicable to dynamic distributed systems.
Our simulation results show that the proposed algorithm succeeds in drastically decreasing the message complexity of the coordinations compared with previous works.

The rest of this paper is organized as follows: Section \ref{sec:related-works} introduces related work.
Section \ref{sec:preliminaries} presents the system model and details of a previous work on which our algorithm is based.
The proposed algorithm designed to take concurrent partial snapshots and detect the termination is described in Section \ref{sec:proposed-algorithm}.
Section \ref{sec:correctness} discusses the correctness of the algorithm.
The performance of the algorithm is experimentally evaluated in comparison with that of an existing algorithm in Section \ref{sec:evaluation}.
Finally, Section \ref{sec:conclusion} concludes the paper.

\section{Related Work}
\label{sec:related-works}
Chandy and Lamport \cite{Chandy1985} proposed a distributed snapshot algorithm that takes a global snapshot of an entire distributed system.
This global snapshot algorithm ensures its correctness when a distributed system is static: No node joins or leaves, and no (communication) link is added or removed.
Moreover, the algorithm assumes that all links guarantee the First in First out (FIFO) property, and each node knows its neighbor nodes.
Chandy and Lamport's snapshot algorithm uses a special message named \emph{Marker}, and each node can determine the timing to record its own local state using the \mtag{Marker} message.
Some snapshot algorithms for distributed systems with non-FIFO links have also been proposed \cite{Lai1987}.
These global snapshot algorithms are easy to implement and take a snapshot of the distributed system.
However, the algorithms require $\mathcal{O}(m)$ messages (where $m$ is the number of links), because every pair of neighboring nodes has to exchange \mtag{Marker} messages.
Therefore, these algorithms are not practically applicable to large-scale distributed systems which consist of a huge number of nodes.

Some researchers have tried to reduce the number of messages of snapshot algorithms\cite{Kshemkalyani2010,Garg2006,Garg2010}, e.g., $\mathcal{O}(n \log n)$, but the complexity depends on $n$, the number of nodes in the entire system.
This implies that the scalability of snapshot algorithms remains critical.
Not only the scalability problem but also applicability to dynamic distributed systems (where nodes can join and leave the distributed system at any time) are important for global snapshot algorithms.

An alternative approach to scalable snapshot algorithms called communication-induced checkpointing has been studied \cite{Helary1999,Baldoni1998,Baldoni1997,Elnozahy2002}.
In this approach, not all nodes are requested to record their local states (as their checkpoints), but some are, depending on the communication pattern.
For distributed applications mainly based on local coordination among nodes, communication-induced checkpoint algorithms can reduce the communication and time required for recording the nodes' checkpoints.
However, these algorithms cannot guarantee that the \emph{latest} checkpoints of the nodes form a consistent global state.
This forces each node to keep multiple checkpoints in the node's non-volatile storage, and requires an appropriate method to find a set of node checkpoints that forms a consistent global state.
Thus, from a practical viewpoint, these snapshot algorithms cannot solve the scalability problem.

Moriya and Araragi \cite{Moriya2001,Moriya2005} introduced a partial snapshot
\footnote{In \cite{Spezialetti1986}, they called a portion of a global snapshot a partial snapshot; however, the notion of a partial snapshot is different from that in our algorithm, SSS algorithm \cite{Moriya2001,Moriya2005}, and CSS algorithm \cite{Kim2011,Kim2014}.  In this paper, a partial snapshot is not a part of a global snapshot, but a snapshot of a subsystem.}
algorithm, which takes a snapshot of the subsystem consisting only of communication-related nodes, named Sub-SnapShot (SSS) algorithm.
They also proved that the entire system can be restored from faults, using the latest checkpoint of each node.
A communication-related subsystem can be transitively determined by the communication-relation, which is dynamically created by (application layer) communications (exchanging messages) among the nodes.
In practical distributed systems, the number of nodes in a communication-related subsystem is expected to be much smaller than the total number of nodes in the distributed system.
This implies that the number of messages required for SSS algorithm does not depend on the total number of nodes.
Therefore, SSS algorithm can create checkpoints efficiently, so that SSS algorithm makes the checkpoint-rollback recovery applicable to large-scale distributed systems.
However, SSS algorithm cannot guarantee the consistency of the (combined) partial snapshot, if two or more nodes concurrently initiate SSS algorithm instances, and their snapshot groups (communication-related subsystems) overlap.

Spezialetti \cite{Spezialetti1986} presented snapshot algorithms to allow concurrent initiation of two or more snapshot algorithms, and an improved variant was proposed by Prakash \cite{Prakash1994}.
However, their algorithms still target the creation of a global snapshot, and their algorithms are not applicable to dynamic distributed systems.
SSS algorithm is applicable to dynamic distributed systems, where nodes can join and leave the system freely, because the algorithm uses only the communication-relation, which changes dynamically, and requires no a priori knowledge about the topology of the entire system.

Another snapshot algorithm for dynamic distributed systems was introduced by Koo and Toueg \cite{Koo1987}.
However, this communication-induced checkpoint algorithm has to suspend executions of all applications while taking a snapshot, to guarantee the snapshot's consistency.
In contrast, SSS algorithm allows execution of any applications while a snapshot is taken, with some elaborate operations based on the communication-relation.

Kim et al., proposed a new partial snapshot algorithm, named Concurrent Sub-Snapshot (CSS) algorithm \cite{Kim2011,Kim2014}, based on SSS algorithm.
They called the problematic situation caused by the overlap of the subsystems a \emph{collision} and presented an algorithm that can resolve collisions by combining colliding SSS algorithm instances.
In CSS algorithm, to resolve the collision, leader election among the initiating nodes of the collided subsystems is executed, and only one leader node becomes a coordinator.
The coordinator and the other initiators are called the \emph{main-initiator} and \emph{sub-initiators}, respectively.
This leader election is executed repeatedly, to elect a new coordinator when a new collision occurs.
All sub-initiators forward all information collected about the subsystems to the main-initiator, so that all the snapshot algorithm instances are coordinated to behave as a single snapshot algorithm which is initiated by the main-initiator.

CSS algorithm successfully realizes an efficient solution for the collision problem, by consistently combining two or more concurrent SSS algorithm executions.
However, if a large number of nodes concurrently initiate CSS algorithm instances, and the nodes collide with each other many times, leader elections are executed concurrently and repeatedly, and an enormous number of messages are forwarded to the main-initiator.
This overhead for combining snapshot groups and forwarding snapshot information for coordination is the most critical drawback of CSS algorithm.

\section{Preliminaries}
\label{sec:preliminaries}

\subsection{System model}
\label{sec:system-model}

Here, we describe the system model we assumed in the paper.
The model definition follows that of SSS algorithm \cite{Moriya2001,Moriya2005}.
We consider distributed systems consisting of nodes that share no common (shared) memory or storage.
Nodes in the system can communicate with each other asynchronously, by exchanging messages (known as the message-passing model).
We assume that each node can send messages to any other node if the node knows the destination node's ID:
It can be realized if its underlying network supports appropriate multi-hop routing, even though the network is not completely connected.
Each node is a state machine and has a unique identifier (ID) drawn from a totally ordered set.
We assume a numerous but finite number of nodes can exist in the system.

We consider dynamic distributed systems, where nodes can frequently join and leave the distributed system.
This implies that the network topology of the system can change, and each node never recognizes the entire system's configurations in real time.
In our assumption, each node can join or leave the system freely, but to guarantee the consistency of the checkpoints, the node can leave the system only after taking a snapshot.
This implies that to leave, the node must initiate a snapshot algorithm.
If a message is sent to a node that has already left the system, the system informs the sender of the transmission failure.
On the other hand, a new coming node can join the system anytime.

Every (communication) link between nodes is reliable, which ensures that all the messages sent through the same link in the same direction are received, each exactly once, in the order they were sent (FIFO).
A message is received only when it is sent.
Because we assume an asynchronous distributed system, all messages are received in finite time (as long as the receiver exists), but with unpredictable delay.

\subsection{SSS algorithm}
\label{sec:sss-algorithm}

In this subsection, we briefly introduce SSS algorithm \cite{Moriya2001,Moriya2005} which takes a partial snapshot of a subsystem consisting of nodes communication-related to a single initiator.
This implies that SSS algorithm efficiently takes a partial snapshot; that is, the algorithm's message and time complexities do not depend on the total number of nodes in the distributed system.
SSS algorithm is also applicable to dynamic distributed systems, where nodes join and leave freely, because it does not require knowledge of the number of nodes or the topology of the system, but requires only a dynamically changing communication-relation among nodes.

In SSS algorithm, every node records its dependency set (\emph{DS}), which consists of the IDs of nodes with which it has communicated (sent or received messages).
SSS algorithm assumes that only a single node (called an \emph{initiator}) can initiate the algorithm, and to determine the subsystem, an initiator traces the communication-relation as follows:
When a node $p_i$ initiates SSS algorithm, the node records its current local state (as its checkpoint) and sends \mtag{Markers} with its ID to all nodes in its dependency set $DS_i$.
When a node $p_j$ receives a \mtag{Marker} message with the ID of $p_i$ for the first time, the node also records its current local state.
After that, $p_j$ forwards the \mtag{Markers} with the ID of $p_i$ to all nodes in its dependency set $DS_j$ and sends $DS_j$ to the initiator $p_i$.
The initiator can trace the communication-relation by referring the dependency sets received from other nodes:
The initiator maintains the union of the received dependency sets, including its own dependency set, and the set of the senders of the dependency sets.
When these two sets become the same, the nodes in the sets constitute the subsystem communication-related to the initiator.
The initiator informs each node $p_j$ in the determined subsystem of the node set of the subsystem; $p_j$ should receive \mtag{Markers} from every node in the set.

Recording in-transit messages in SSS algorithm is basically the same as in traditional distributed snapshot algorithms (Chandy and Lamport's manner).
Each node joining the partial snapshot algorithm records messages which are received before receipt of the \mtag{Marker} in each link.

\section{CPS Algorithm: The Proposed Algorithm}
\label{sec:proposed-algorithm}

\subsection{Overview}
When two or more nodes concurrently initiate SSS algorithm instances, the subsystems (called \emph{snapshot group}) may overlap, which is called \emph{a collision}.
CSS algorithm has been proposed with the aim of resolving this collision.
This algorithm combines the collided snapshot groups, using leader election repeatedly.
This allows concurrent initiations by two or more initiators; however, it causes a huge amount of communication cost for leader elections, if collisions occur frequently.
Moreover, to guarantee the consistency of the combined partial snapshot, every initiator must forward all information, e.g., the node list, the dependency set, and the collision-related information, to the leader.
This forwarding causes additional communication cost.

To reduce the communication cost, we propose a new partial snapshot algorithm, \emph{CPS algorithm}, which stands for \emph{Concurrent Partial Snapshot}.
This algorithm does not execute leader election to resolve a collision every time a collision is detected.
Instead, CPS algorithm creates a virtual link between the two initiators of the two collided groups, which is realized by making each initiator just store the other's ID as its neighbor's.
These links construct the overlay network which consists only of initiators.
We called this overlay network an \emph{initiator network}, and no information is forwarded among initiators in this network.
Figure \ref{fig:initiator-network} illustrates an example of an initiator network for a case where three snapshot groups collide with each other.

CPS algorithm consists of two phases: Concurrent Partial Snapshot Phase (Phase 1) and Termination Detection Phase (Phase 2).
In Phase 1, an initiator sends \mtag{Marker} messages to its communication-related nodes to determine its snapshot group.
If the snapshot group collides with another group, the initiator and the collided initiator create a virtual link between them for their initiator network.
When the snapshot group is determined, the initiator of the group proceeds to Phase 2 to guarantee the consistency of the checkpoints in all (overlapped) snapshot groups.
In Phase 2, to achieve the guarantee, each initiator communicates with each other in the initiator network to check all the initiators have already determined their snapshot groups.
After this check is completed, an initiator tells the termination condition of each node in the initiator's snapshot group and goes back to Phase 1 to finish the algorithm.
Note that all nodes in the snapshot groups execute Phase 1 on the real network, and only initiators execute Phase 2 on the initiator network that is constructed in Phase 1.

\begin{figure}[bt]
	\centering
	\includegraphics[scale=0.55]{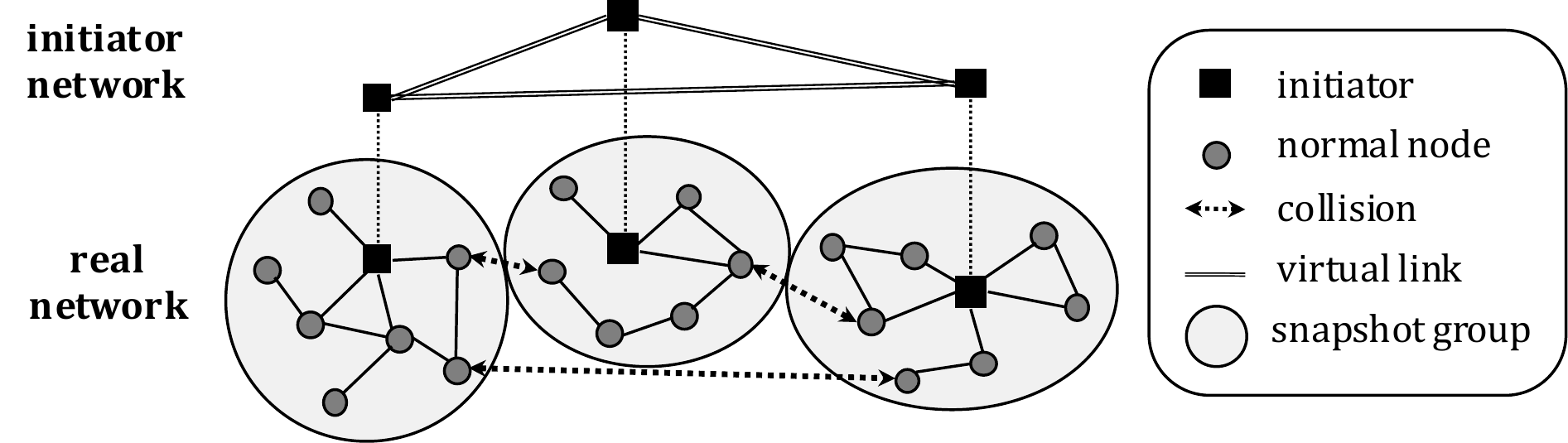}
	\caption{An (overlay) initiator network consisting of initiators}
	\label{fig:initiator-network}
\end{figure}

In this section, we describe the proposed CPS algorithm.
First, Section \ref{sec:basic-action} explains how the proposed algorithm handles events of sending/receiving an application message.
Then, Section \ref{sec:phase1-overview} and Section \ref{sec:phase2-overview} provide details of the two phases of the algorithm, i.e., Concurrent Partial Snapshot Phase and Termination Detection Phase.

\subsection{Basic operation}
\label{sec:basic-action}

To take a snapshot safely, CPS algorithm must handle events of sending or receiving an application message (as other snapshot algorithms do).
Algorithm \ref{alg:phase1basic} shows the operations that each node executes before sending (lines \ref{algl:send-msg-begin}--\ref{algl:send-msg-end}) or receiving (lines \ref{algl:receive-msg-begin}--\ref{algl:receive-msg-end}) an application message.
When node $p_i$ currently executing CPS algorithm ($init_i \neq \KwNull$) sends a message to node $p_j$ which is not in the $DS_i$, $p_i$ has to send \mtag{Marker} to $p_j$ before sending the message.
Variable $pDS$ stores DS when a node receives the first \mtag{Marker} to restore the content of DS when a snapshot algorithm is canceled.

Figure \ref{fig:orphan-example} depicts why this operation is necessary:
Let $p_k$ be the node which is communication-related to $p_i$ and $p_j$ ($p_i$ and $p_j$ are not communication-related with each other).
When each node receives \mtag{Marker} for the first time, the node broadcasts \mtag{Marker} to all the nodes in its $DS$.
Therefore, $p_i$ already sent \mtag{Marker} to $p_k$, and $p_k$ sends \mtag{Marker} to $p_j$ when these nodes receive the \mtag{Markers}.
However, if $p_i$ sends a message $m_{ij}$ to $p_j$ without sending \mtag{Marker} to $p_j$, the message might be received before the \mtag{Marker} from $p_k$, and it makes $m_{ij}$ an orphan message.
Let us consider another case in Fig.~\ref{fig:orphan-example} where $p_j$ sends $m_{ji}$ to $p_i$ before $p_j$ stores its checkpoint.
When $p_i$ receives $m_{ji}$, $p_i$ adds $m_{ji}$ into $MsgQ$ as defined in Algorithm \ref{alg:phase1basic} because $p_i$ is executing CPS algorithm and has not received a \mtag{Marker} message from $p_j$.
After finishing CPS algorithm, $m_{ji}$ is stored as one of the in-transit messages with the checkpoint.
Therefore, $m_{ji}$ never becomes an orphan message.

\begin{figure}[t]
	\centering
	\includegraphics[scale=0.55]{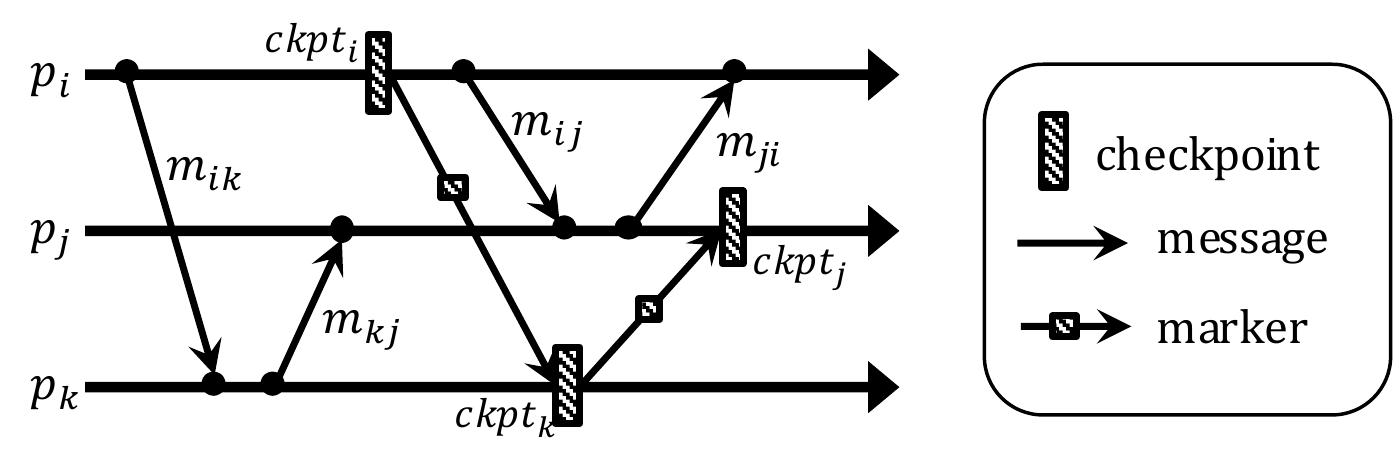}
	\caption{Orphan message $m_{ij}$}
	\label{fig:orphan-example}
\end{figure}

\begin{algorithm}[t]
\caption{Basic actions of Phase 1}
\label{alg:phase1basic}

\begin{algorithmic}[1]
\algfontsize
	\Procedure{Before $p_i$ sends a message to $p_j$}{} \label{algl:send-msg-begin}
	\If{$init \neq \KwNull \wedge p_j \notin pDS \cup DS \wedge InPhase2 = \KwFalse$}
		\State \Comment{Send \mtag{Marker} before sending a message}
		\State{Send \msg{\mtag{Marker}, init} to $p_j$}
	\EndIf
	\State{$DS \leftarrow DS \cup \{p_j\}$} \Comment{Add $p_j$ to its $DS$}
\EndProcedure \label{algl:send-msg-end}

\Procedure{Before $p_i$ receives a message from $p_j$}{} \label{algl:receive-msg-begin}
	\State{$DS \leftarrow DS \cup \{p_j\}$}\Comment{Add $p_j$ to its $DS$}
	\If{$init \neq \KwNull \wedge p_j \notin RcvMk$}
		\State{Add ($p_j$, message) to $MsgQ$}
	\EndIf
\EndProcedure \label{algl:receive-msg-end}

\end{algorithmic}
\end{algorithm}

\subsection{Phase 1: Concurrent Partial Snapshot Phase}
\label{sec:phase1-overview}

This phase is basically the same as that in SSS algorithm, except for the collision-handling process.
Each node can initiate a snapshot algorithm at any time, by sending a special message \emph{Marker} to the node's communication-related nodes, and the other nodes record their local states when they receive \mtag{Marker} for the first time.
An initiator of CPS algorithm traces the communication-relation to determine its partial snapshot group.

In Phase 1, each node $p_i$ maintains the following variables:
\begin{itemize}
	\setlength{\itemsep}{0em}
    \item $init_i$: Initiator's ID.
		An initiator sets this variable as its own ID.
		A normal node (not initiator) sets this variable to the initiator ID of the first \mtag{Marker} message it receives.
		Initially \KwNull.
	\item $DS_i$: A set of the IDs of the (directly) communicate-related nodes.
		This set is updated when $p_i$ sends/receives an application message as described in Section \ref{sec:basic-action}.
	\item $pDS_i$: A set variable that stores the $DS_i$ temporarily.
		Initially $\emptyset$.
	\item $RcvMk_i$: A set of the IDs of the nodes from which $p_i$ (already) received \mtag{Marker} messages.
		Initially $\emptyset$.
	\item $MkList_i$: A set of the IDs of the nodes from which $p_i$ has to receive \mtag{Marker} messages to terminate the algorithm.
		Initially $\emptyset$.
	\item $fin_i$: A boolean variable that denotes whether the partial snapshot group is determined or not.
		Initially \KwFalse.
		An initiator updates this variable to {\KwTrue} when Phase 1 terminates, while a non-initiator node updates this when the node receives a \mtag{Fin} message.
	\item $MsgQ_i$: A message queue that stores a sequence of the messages for checkpoints, as the pairs of the ID of the sender node and the message.
		Initially \KwNull.
	\item $CollidedNodes_i$: A set of the IDs of the nodes from which $p_i$ received collision \mtag{Marker} messages.
		Initially $\emptyset$.
	\item $MkFrom_i$ (Initiator only): A set of the IDs of the nodes that send \mtag{Marker} to its DS.
		Initially $\emptyset$.
	\item $MkTo_i$ (Initiator only): The union set of the DSes of the nodes in $MkFrom$.
		Initially $\emptyset$.
	\item $DSInfo_i$ (Initiator only): A set of the pairs of a node ID and its DS.
		Initially $\emptyset$.
	\item $Wait_i$ (Initiator only): A set of the IDs of the nodes from which $p_i$ is waiting for a reply to create a virtual link of the initiator network.
		Initially $\emptyset$.
	\item $N_i$ (Initiator only): A set of the neighbor nodes' IDs in the initiator network.
		Initially $\emptyset$.
\end{itemize}

We use the following message types in Phase 1.
We denote the algorithm messages by \msg{\mtag{MessageType}, arg_1, arg_2, \ldots}.
Note that some messages have no argument.
We assume that every message includes the sender ID and the snapshot instance ID, which is a pair of an initiator ID and a sequence number of the snapshot instances the initiator invoked, to distinguish snapshot algorithm instances that are or were executed.
\begin{itemize}
	\setlength{\itemsep}{0em}
	\item \msg{\mtag{Marker}, init}: A message which controls the timing of the recording of the local state.
		Parameter $init$ denotes the initiator's ID.
	\item \msg{\mtag{MyDS}, DS}: A message to send its own DS (all nodes communication-related to this node) to its initiator.
	\item \msg{\mtag{Out}}: A message to cancel the current snapshot algorithm.
		When a node who has been an initiator receives a \mtag{MyDS} message of the node's previous instance, the node sends this message to cancel the sender's snapshot algorithm instance.
	\item \msg{\mtag{Fin}, List}: A message to inform that its partial snapshot group is determined.
		$List$ consists of the IDs of the nodes from which the node has to receive \mtag{Marker} messages to terminate the algorithm.
	\item \msg{\mtag{NewInit}, p, Init}: A message to inform that a different initiator has been detected.
		$Init$ denotes the ID of the detected initiator, and $p$ denotes the ID of the node which sends \mtag{Marker} with $Init$.
	\item \msg{\mtag{Link}, p, q}: A message sent by an initiator to another initiator to confirm whether a link (of the overlay network) can be created between the two initiators or not.
		$p$ denotes the ID of the node which received a collided \mtag{Marker}, and $q$ denotes the ID of the sender node.
	\item \msg{\mtag{Ack}, p, q}: A reply message for a \msg{Link, p, q} message when the link can be created.
	\item \msg{\mtag{Deny}, p, q}: A reply message for a \msg{Link, p, q} message when the link cannot be created.
	\item \msg{\mtag{Accept}, p, Init}: A reply message for a \msg{NewInit, p, Init} message when the link between its initiator and $Init$ is successfully created.
\end{itemize}

Algorithm \ref{alg:phase1-normaloperations} presents the pseudo-code of Phase 1.
By this algorithm, each node stores, as a checkpoint, a local application state in line \ref{algl:store-localstate} and in-transit messages in line \ref{algl:store-msgq}.

We briefly present how an initiator determines its partial snapshot group when no collision occurs.
Figure \ref{fig:psg-example} describes an example of a distributed system consisting of 10 nodes, $p_0$ to $p_9$, and some pairs are communication-related:
For example, $p_7$ has communication-relations with $p_0$, $p_6$, and $p_8$; i.e., $DS_7 = \{p_0, p_6, p_8\}$.
In this example, $p_0$ initiates CPS algorithm.
$p_0$ initializes all variables, and records its local state; then, $p_0$ sends \msg{\mtag{Marker}, p_0} to all nodes in $DS_0 = \{p_2, p_3, p_6, p_7\}$ (lines \ref{algl:p1-case-a-begin}--\ref{algl:p1-case-a-end}).
When $p_3$ receives the first \mtag{Marker} from $p_0$, $p_3$ records its local state, and sets $p_0$ as its initiator (variable $init_3$) (lines \ref{algl:p1-case-a-begin}--\ref{algl:store-localstate}).
Then, $p_3$ sends its $DS_3$ to its initiator $p_0$ using the \msg{\mtag{MyDS}, DS_3} message (line \ref{algl:send-myds}).
After that, $p_3$ sends \msg{\mtag{Marker}, p_0} to all nodes in $DS_3 = \{p_0, p_8\} (line \ref{algl:p1-case-a-end})$.
Note that node $p_8$, which is not directly communication-related to $p_0$, also receives \msg{\mtag{Marker}, p_0} from $p_3$ (or $p_7$) and records its local state.
If the initiator $p_0$ receives a \msg{\mtag{MyDS}, DS_i} message from $p_i$, it adds the ID $p_i$ and $DS_i$ to $MkFrom_0$ and $MkTo_0$ respectively, and inserts $(i, DS_i)$ into $DSInfo_0$ (lines \ref{algl:update-myds-begin}--\ref{algl:update-myds-end}).
When $MkTo_0 \subseteq MkFrom_0$\footnote{If $DS_0$ remains unchanged, $MkTo_0 = MkFrom_0$ holds.
However, each node $p_i$ can send a message to a node not in $DS_i$ (which adds the node to $DS_i$) even while CPS algorithm is being executed.
This may cause $MkTo_0 \subset MkFrom_0$; refer to Algorithm \ref{alg:phase1basic} for details.} holds, this means that all nodes which are communication-related to the initiator already received the \mtag{Marker}.
Thus, the initiator determines its partial snapshot group as the nodes in $MkFrom_0$, and proceeds to Phase 2 (lines \ref{algl:determine-psg-begin}--\ref{algl:determine-psg-end}), named the \emph{Termination Detection Phase}, which is presented in the next subsection.
When Phase 2 finishes, the initiator sends the \msg{\mtag{Fin}, MkList_i} message to each $p_i \in MkFrom_0$ (lines \ref{algl:p2-fin-begin}--\ref{algl:p2-fin-end} of Algorithm \ref{alg:phase2}), where $MkList_i$ is the set of the IDs from which $p_i$ has to receive \mtag{Markers}.
If node $p_i$ has received \mtag{Marker} messages from all the nodes in $MkList_i$, $p_i$ terminates the algorithm (lines \ref{algl:checkterm-begin}--\ref{algl:checkterm-end}).

\begin{figure}[t]
	\centering
	\includegraphics[scale=0.55]{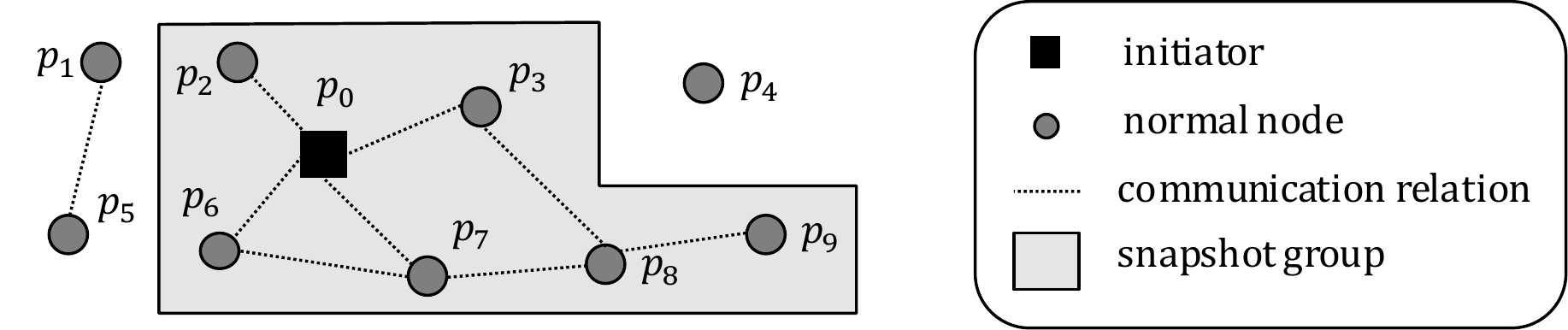}
	\caption{Partial snapshot group example}
	\label{fig:psg-example}
\end{figure}

\begin{algorithm}[t]
\caption{Pseudo code of CPS algorithm for node $p_i$ (normal operations of Phase 1)}
\label{alg:phase1-normaloperations}
\begin{algorithmic}[1]
\algfontsize
\Procedure{Initiate}{ }
	\State OnReceive(\msg{\mtag{Marker}, p_i}) \label{algl:initiation}
\EndProcedure

\Statex
\Procedure{OnReceive}{\msg{\mtag{Marker}, p_x} from $p_j$} \label{algl:marker-begin}
	\If{$init_i = \KwNull$}
		\State \Comment{This is the first \mtag{Marker}} \label{algl:p1-case-a-begin}
		\State $init_i \gets p_x$, $RcvMk_i \gets RcvMk_i \cup \{p_j\}$
		\State $pDS_i \gets DS_i$, $DS_i \gets \emptyset$
		\State $MkList_i \gets \emptyset$, $fin_i \gets \KwFalse$
		\State $MsgQ_i \gets \emptyset$
		\State Record its own local state \label{algl:store-localstate}
		\State Send \msg{\mtag{MyDS}, pDS_i} to $init_i$ \label{algl:send-myds}
		\State Send \msg{\mtag{Marker}, p_x} to $\forall p_k \in pDS_i$ \label{algl:p1-case-a-end}

	\ElsIf{$init_i = p_x$}
		\State \Comment{\mtag{Marker} from the same snapshot group} \label{algl:p1-case-b-begin}
		\State $RcvMk_i \gets RcvMk_i \cup \{p_j\}$
		\If{$fin_i = \KwTrue$}
			\State $\proc{CheckTermination}()$
		\EndIf \label{algl:p1-case-b-end}

	\ElsIf{$init_i \neq p_x$}
		\State \Comment{A collision occurs} \label{algl:p1-case-c-begin}
		\State $RcvMk_i \gets RcvMk_i \cup \{p_j\}$
		\State $CollidedNodes_i \gets CollidedNodes_i \cup \{(p_j, p_x)\}$
		\If{$fin_i = \KwFalse$}
			\State Send \msg{\mtag{NewInit}, p_j, p_x} to $init_i$ \label{algl:send-newinit}
		\EndIf \label{algl:p1-case-c-end}

	\EndIf

\EndProcedure \label{algl:marker-end}

\Statex
\Procedure{OnReceive}{\msg{\mtag{MyDS}, DS_j} from $p_j$}
	\If{$init_i = \KwNull \vee fin_i = \KwTrue$}
		\State Send \msg{\mtag{Out}} to $p_j$
	\Else
		\State $MkFrom_i \gets MkFrom_i \cup \{p_j\}$ \label{algl:update-myds-begin}
		\State $MkTo_i \gets MkTo_i \cup \{DS_j\}$
		\State $DSInfo_i \gets DSInfo_i \cup (p_j, DS_j)$ \label{algl:update-myds-end}
		\State $\proc{CanDetermineSG}()$
	\EndIf
\EndProcedure

\Statex
\Procedure{OnReceive}{\msg{\mtag{Out}} from $p_j$}
	\State \Comment{Cancel its snapshot algorithm}
	\State $init_i \gets \KwNull$
	\State $DS_i \gets DS_i \cup pDS_i$
	\State Delete recorded local state and received messages in $MsgQ_i$
	\State $\proc{ReProcessMarker}()$
\EndProcedure

\Statex
\Procedure{OnReceive}{\msg{\mtag{Fin}, List} from $p_j$}
	\State $MkList_i \gets List$
	\State \Comment{My initiator notifies the determination of its snapshot group}
	\State $fin_i \gets \KwTrue$
	\State $\proc{CheckTermination}()$
\EndProcedure

\Statex
\Procedure{OnTermination}{ }
	\State $\proc{ReProcessMarker}()$
\EndProcedure

\algstore{normaloperations}
\end{algorithmic}
\end{algorithm}

\setcounter{algorithm}{1}
\begin{algorithm}[t]
\caption{Pseudo code of CPS algorithm for node $p_i$ (normal operations of Phase 1) (Cont'd)}
\begin{algorithmic}[1]
\algfontsize
\algrestore{normaloperations}

\Procedure{CanDetermineSG()}{} \label{algl:candeterminesg-begin}
	\If{$MkTo_i \subseteq MkFrom_i \wedge Wait_i = \emptyset$} \label{algl:candeterminesg-condition}
		\State \Comment{Initiator $p_i$ determines its snapshot group} \label{algl:determine-psg-begin}
		\State $fin_i \gets \KwTrue$
		\State $\proc{StartPhase2}()$ \label{algl:determine-psg-end}
	\EndIf
\EndProcedure \label{algl:candeterminesg-end}

\Statex
\Procedure{CheckTermination()}{} \label{algl:checkterm-begin}
	\If{$MkList_i \subseteq RcvMk_i$}
		\For{each $(p_j, m)$ in $MsgQ_i$}
			\If{$p_j \in MkList_i$}
				\State Record $m$ as an in-transit message \label{algl:store-msgq}
			\EndIf
		\EndFor
		\State Wait until $InPhase2_i = \KwFalse$
		\State Terminate this snapshot algorithm
	\EndIf
\EndProcedure \label{algl:checkterm-end}

\Statex
\Procedure{ReProcessMarker}{ }
	\If{$CollidedNodes_i \neq \emptyset$}
		\State \Comment{Process \mtag{Markers} again for collisions that is not resolved}
		\For{each $(p_y, p_b) \in CollidedNodes_i$}
			\State \proc{OnReceive}(\msg{\mtag{Marker}, p_b} from $p_y$)
		\EndFor
	\EndIf
\EndProcedure

\end{algorithmic}
\end{algorithm}

Algorithm \ref{alg:collision-handling} presents the pseudo-code of the collision-handling procedures in Phase 1.
In the algorithm, we change some notations of node IDs for ease of understanding.
Our assumption is depicted in Figure \ref{fig:collision-notation}.
We assume that a collision occurs between two snapshot groups, and let $p_x$ and $p_y$ be the nodes executing the snapshot algorithm by receiving \mtag{Marker} from the initiators $p_a$ and $p_b$, respectively.
Node $p_x$ receives \msg{\mtag{Marker}, p_b} from $p_y$, and $p_x$ informs its initiator $p_a$ of a collision by sending a \mtag{NewInit} message, because $init_x \neq p_b$.

\begin{figure}[t]
	\centering
	\includegraphics[scale=0.55]{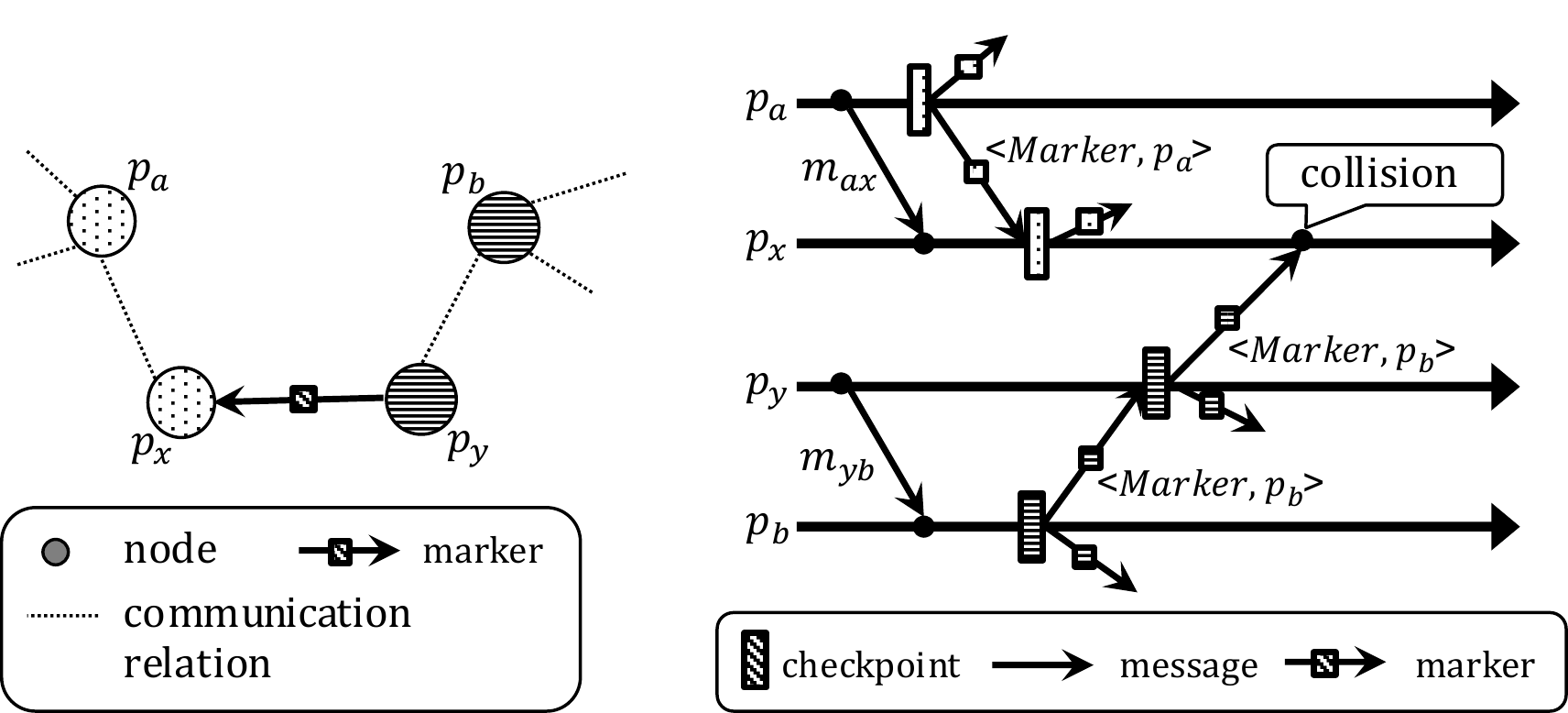}
	\caption{Collision assumption of Algorithm \ref{alg:collision-handling}}
	\label{fig:collision-notation}
\end{figure}

\begin{algorithm}[t]
\caption{Pseudo code of CPS algorithm (collision handling of Phase 1)}
\label{alg:collision-handling}
\begin{algorithmic}[1]
\algfontsize

\State \Comment{From the view of $p_a$ in Fig.~\ref{fig:collision-notation}}
\Procedure{OnReceive}{\msg{\mtag{NewInit}, p_y, p_b} from $p_x$}
	\If{$fin_a = \KwFalse$}
		\If{$p_b \notin N_a$}
			\State $Wait_a \gets Wait_a \cup (p_x, p_y, p_b)$ \label{algl:wait-add}
			\State Send \msg{\mtag{Link}, p_x, p_y} to $p_b$ \label{algl:send-link}
		\Else
			\State $MkFrom_a \gets MkFrom_a \cup \{p_y\}$
			\State $MkTo_a \gets MkTo_a \cup \{p_x\}$
			\State $DSInfo_a \gets DSInfo_a \cup (p_y, \{p_x\})$
			\State Send \msg{\mtag{Link}, p_x, p_y} to $p_b$
			\State Send \msg{\mtag{Accept}, p_y, p_b} to $p_x$
		\EndIf
	\ElsIf{$p_b \in N_a$}
		\State Send \msg{\mtag{Link}, p_x, p_y} to $p_b$
	\EndIf
\EndProcedure

\Statex
\State \Comment{From the view of $p_b$ in Fig.~\ref{fig:collision-notation}}
\Procedure{OnReceive}{\msg{\mtag{Link}, p_x, p_y} from $p_a$} \label{algl:link-begin}
	\If{$fin_b = \KwFalse$}
		\State $MkFrom_b \gets MkFrom_b \cup \{p_x\}$ \label{algl:link-mkfrom-update}
	    \If{$p_a \notin N_b$}
    		\State $N_b \gets N_b \cup \{p_a\}$
    		\State $MkTo_b \gets MkTo_b \cup \{p_y\}$
    		\State $DSInfo_b \gets DSInfo_b \cup (p_x, \{p_y\})$
			\State Send \msg{\mtag{Ack}, p_x, p_y} to $p_a$ \label{algl:send-ack}
    		\State $\proc{AcceptColliededNodes}(p_a)$
    		\State $\proc{CanDetermineSG}()$
    	\EndIf
	\Else
		\State Send \msg{\mtag{Deny}, p_x, p_y} to $p_a$ \label{algl:send-deny}
	\EndIf
\EndProcedure \label{algl:link-end}

\State \Comment{From the view of $p_a$ in Fig.~\ref{fig:collision-notation}}
\Procedure{OnReceive}{\msg{\mtag{Ack}, p_x, p_y} from $p_b$}
	\State $N_a \gets N_a \cup \{p_b\}$
	\State $\proc{AcceptColliededNodes}(p_b)$
	\State $\proc{CanDetermineSG}()$
\EndProcedure

\Statex
\State \Comment{From the view of $p_a$ in Fig.~\ref{fig:collision-notation}}
\Procedure{OnReceive}{\msg{\mtag{Deny}, p_x, p_y} from $p_b$}
	\State $Wait_a \gets Wait_a \setminus \{(p_x, p_y, p_b)\}$
	\If{$p_b \notin N_a$}
		\State $\proc{CanDetermineSG}()$
	\EndIf
\EndProcedure

\Statex
\State \Comment{From the view of $p_x$ in Fig.~\ref{fig:collision-notation}}
\Procedure{OnReceive}{\msg{\mtag{Accept}, p_y, p_b} from $p_a$}
	\If{$p_y \notin pDS_x$}
		\State Send \msg{\mtag{Marker}, p_b} to $p_y$ \label{algl:send-marker-for-collision}
	\EndIf
	\State $CollidedNodes_x \gets CollidedNodes_x \setminus \{(p_y, p_b)\}$
\EndProcedure

\Statex
\State \Comment{From the view of $p_a$ in Fig.~\ref{fig:collision-notation}}
\Procedure{AcceptCollidedNodes}{$p_b$}
	\For{each $(p_i, p_j, p_b) \in Wait$}
		\State $MkFrom \gets MkFrom \cup \{p_j\}$
		\State $MkTo \gets MkTo \cup \{p_i\}$
		\State $DSInfo \gets DSInfo \cup (p_j, \{p_i\})$
		\State Send \msg{\mtag{Accept}, p_j, p_k} to $p_i$ \label{algl:send-accept}
		\State $Wait \gets Wait \setminus \{(p_i, p_j, p_k)\}$
	\EndFor
\EndProcedure

\end{algorithmic}
\end{algorithm}

Figure \ref{fig:collision-example} illustrates an example of the message flow when a collision occurs.
In the example, we assume that two initiators, $p_0$ and $p_6$, concurrently initiate CPS algorithm instances, and $p_4$ detects a collision as follows.
Node $p_4$ receives \msg{\mtag{Marker}, p_0} from $p_3$, and \msg{\mtag{Marker}, p_6} from $p_5$ in this order.
Because $p_4$ receives \mtag{Marker} with initiator $p_6$ different from its initiator $p_0$, $p_4$ sends \msg{\mtag{NewInit}, p_5, p_6} to its initiator $p_0$ (line \ref{algl:send-newinit} of Algorithm \ref{alg:phase1-normaloperations}).
When $p_0$ receives the \mtag{NewInit}, if $p_0$ has not determined the partial snapshot group yet, $p_0$ sends a \msg{\mtag{Link}, p_4, p_5} message to opponent initiator $p_6$ (line \ref{algl:send-link}).
As a reply to the \mtag{Link} message, $p_6$ sends a \msg{\mtag{Ack}, p_4, p_5} message (line \ref{algl:send-ack}), if $p_6$ also has not determined its partial snapshot group yet.
Otherwise, $p_6$ sends a \msg{\mtag{Deny}, p_4, p_5} message to $p_0$\footnote{In the \mtag{Deny} case, $p_6$ has determined its snapshot group and has sent \mtag{Fin} messages to the nodes in the group including $p_5$.
Node $p_5$ eventually receives the \mtag{Fin} message and terminates the snapshot algorithm.
While $p_4$ cannot receive any response for the \mtag{NewInit} message $p_4$ sent, the node also eventually receives its \mtag{Fin} message from $p_0$.
If there exists an application message $m_{54}$ sent from $p_5$ to $p_4$, the sent must be after taking the checkpoint of $p_5$ for $p_6$'s snapshot (otherwise, the two snapshot groups of $p_0$ and $p_6$ are merged).
Node $p_4$ also receives $m_{54}$ after taking its checkpoint for $p_0$'s snapshot.
If $p_4$ receives the message before its checkpoint, $p_5$ send a \mtag{Marker} to $p_4$ before $m_{54}$, $p_4$ should join in $p_6$'s snapshot group.
The application message $m_{54}$ is sent and received after the checkpoints of $p_4$ and $p_5$; thus, the message never becomes an orphan.
We can have the same discussion for an application message sent in the opposite direction.
} (line \ref{algl:send-deny}).
Finally, $p_0$ sends \msg{\mtag{Accept}, p_5, p_6} to $p_4$ which detected the collision (line \ref{algl:send-accept}), and $p_4$ sends \msg{\mtag{Marker}, p_6} to $p_5$ (line \ref{algl:send-marker-for-collision}).
Note that this \mtag{Marker} is necessary to decide which messages should be recorded in the checkpoint in $p_5$.
In this example, we also notice the following points: (1) In Figure \ref{fig:collision-example}, $p_5$ may also detect a collision by \msg{\mtag{Marker}, p_0} from $p_4$.
This causes additional message communications between $p_0$ and $p_6$; e.g., $p_6$ also sends a \mtag{Link} message to $p_0$.
(2) Even if there is no communication-relation between $p_4$ and $p_5$, when two initiators invoke CPS algorithm instances, $p_4$ or $p_5$ can send \mtag{Marker} in advance to send a message (refer to Algorithm \ref{alg:phase1basic}).
In this case, a virtual link between $p_0$ and $p_6$ may not be created, because either of them may have already determined their partial snapshot groups (note that $p_5$ and $p_4$ are not included in $DS_4$ and $DS_5$, respectively).

\begin{figure}[t]
	\centering
	\includegraphics[scale=0.55]{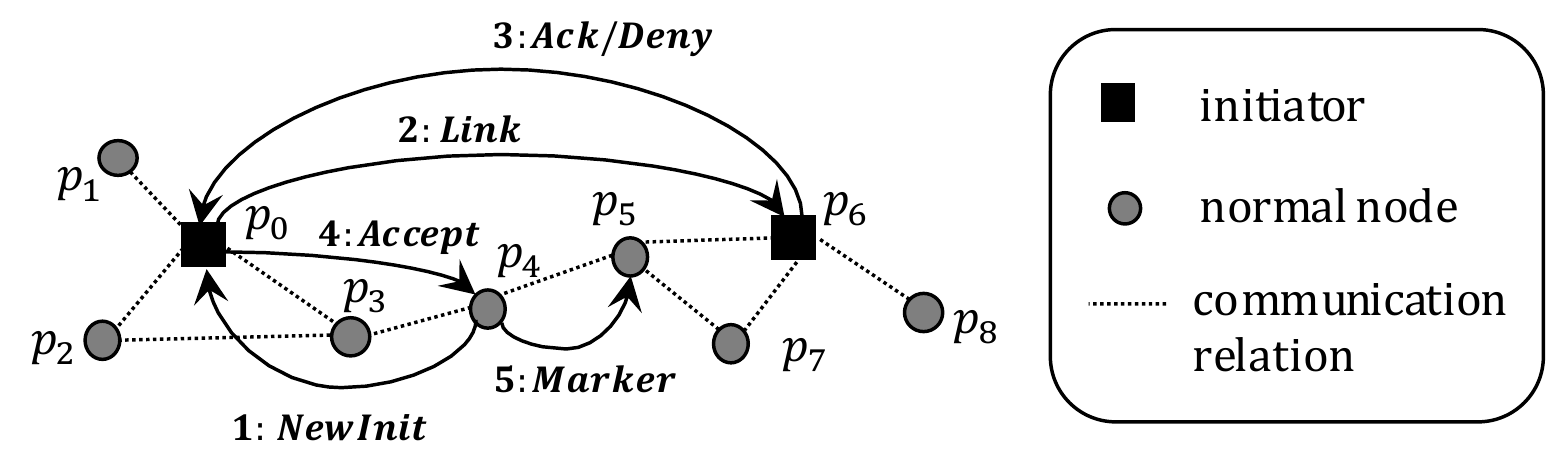}
	\caption{Collision-handling example in CPS algorithm}
	\label{fig:collision-example}
\end{figure}

\subsection{Phase 2: Termination Detection Phase}
\label{sec:phase2-overview}

Only the initiators, which determine their partial snapshot groups, execute Phase 2.
Note that Phase 2 is executed on the initiator network that was constructed in Phase 1.
The goal of this phase is to confirm that all initiators in the initiator network have already determined their snapshot groups\footnote{If an initiator has not experienced any collision in Phase 1, the initiator terminates Phase 2 immediately because the initiator does not need to wait other snapshot groups.}.
In other words, all initiators in the initiator network completed Phase 1, and are executing Phase 2.
In this phase, the proposed algorithm elects one initiator as the leader, and constructs a breadth-first spanning tree rooted at the leader.
From the leaves to the root, each initiator notifies its parent initiator in the tree that it is in Phase 2 (\emph{convergecast}), and when the convergecast terminates, the leader broadcasts the termination of Phase 2 to all other initiators (\emph{broadcast}).

In Phase 2, each initiator $p_i$ maintains the following variables:
\begin{itemize}
	\setlength{\itemsep}{0em}
	\item $rID_i$: The ID of the root initiator the initiator currently knows.
		Initially, \KwNull.
	\item $dist_i$: The distance to the root initiator $rID_i$.
		Initially, $\infty$.
	\item $pID_i$: The ID of the parent initiator in the (spanning) tree rooted at the root initiator $rID_i$.
		Initially, \KwNull.
	\item $Child_i$: A set of the IDs of the child initiators in the (spanning) tree.
		Initially, $\emptyset$.
	\item $LT_i$: A set of the IDs of the initiator from which the initiator received \mtag{LocalTerm} messages.
		Initially, $\emptyset$.
	\item $CK_i$: A set of the IDs of the initiator from which the initiator received \mtag{Check} messages.
		Initially, $\emptyset$.
	\item $InPhase2_i$: A boolean variable.
		This is {\KwTrue} if $p_i$ is in Phase 2; otherwise, {\KwFalse}.
\end{itemize}
In addition, the following Phase 1 variables are also used.
Note that these variables are never updated in Phase 2.
\begin{itemize}
	\setlength{\itemsep}{0em}
	\item $MKFrom_i$
	\item $DSInfo_i$
\end{itemize}

The following messages are used in Phase 2.
\begin{itemize}
	\setlength{\itemsep}{0em}
	\item \msg{\mtag{Check}, rID, dist, pID}: A message to inform its neighbors of the smallest ID that the initiator currently knows.
		$rID$ is the initiator that has the smallest ID (the initiator currently knows), $dist$ is the distance to $rID$, and $pID$ is the parent initiator's ID to $rID$.
	\item \msg{\mtag{LocalTerm}}: A message for a convergecast.
	\item \msg{\mtag{GlobalTerm}}: The leader initiator (which has the smallest ID) broadcasts this message to all other initiators when a convergecast is successfully finished.
\end{itemize}

\begin{algorithm}[t]
\caption{Pseudo code of CPS algorithm for initiator $p_i$ (Phase 2)}
\label{alg:phase2}
\begin{algorithmic}[1]
\algfontsize

\Procedure{StartPhase2()}{}
	\If{$N_i \neq \emptyset$}
		\State $rID_i \gets p_i$, $dist_i \gets 0$, $pID_i \gets p_i$, $Child_i \gets \emptyset$
		\State $LT_i \gets \emptyset$, $CK_i \gets \emptyset$, $InPhase2_i \gets \KwTrue$
		\State Send \msg{\mtag{Check}, rID_i, dist_i, pID_i} to $\forall p_j \in N_i$ \label{algl:send-check-first}
		\State Process the messages arrived before entering Phase 2
	\Else
		\State \Comment{There are no neighbors on the initiator network}
		\State $\proc{FinishPhase2}()$
	\EndIf
\EndProcedure

\Statex
\Procedure{OnReceive}{\msg{\mtag{Check}, rID_j, dist_j, pID_j} from $p_j \in N_i$}
	\State $CK_i \gets CK_i \cup \{p_j\}$
	\If{$rID_j < rID_i \vee (rID_j = rID_i \wedge dist_j+1 < dist_i)$}
		\State $rID_i \gets rID_j$, $dist_i \gets dist_j + 1$, $pID_i \gets p_j$ \label{algl:update-root}
		\State Send \msg{\mtag{Check}, rID_i, dist_i, pID_i} to $\forall p_j \in N_i$ \label{algl:send-check}
	\EndIf

	\If{$pID_j = p_i$}
		\State $Child_i \gets Child_i \cup \{p_j\}$
	\ElsIf{$p_j \in Child_i$}
		\State $Child_i \gets Child_i \setminus \{p_j\}$
		\State $LT_i \gets LT_i \setminus \{p_j\}$
	\EndIf
	\If{$CK_i = N_i \wedge Child_i = \emptyset$}
		\State Send \msg{\mtag{LocalTerm}} to $pID_i$ \label{algl:send-lt}
	\EndIf
\EndProcedure

\Statex
\Procedure{OnReceive}{\msg{\mtag{LocalTerm}} from $p_j \in N_i$}
		\State $LT_i \gets LT_i \cup \{p_j\}$ \label{algl:update-lt}
	\If{$Child_i = CK_i = LT_i = N_i \wedge pID_i = p_i$}
		\State Send \msg{\mtag{GlobalTerm}} to $\forall p_j \in Child_i$ \label{algl:send-gt}
		\State $\proc{FinishPhase2}()$
	\ElsIf{$Child_i = LT_i \wedge CK_i = N_i$}
		\State Send \msg{\mtag{LocalTerm}} to $pID_i$
	\EndIf
\EndProcedure

\Statex
\Procedure{OnReceive}{\msg{\mtag{GlobalTerm}} from $p_j \in N_i$}
	\State Send \msg{\mtag{GlobalTerm}} to $\forall p_j \in Child_i$
	\State $\proc{FinishPhase2}()$
\EndProcedure

\Statex
\Procedure{FinishPhase2()}{}
	\State $InPhase2_i \gets \KwFalse$
	\For{each $p_k \in MkFrom_i$} \label{algl:p2-fin-begin}
		\State $MkList_k \gets \{ \forall p_x \mid p_k \in DS_x, (p_x, DS_x) \in DSInfo_i\}$
		\State Send \msg{\mtag{Fin}, MkList_k} to $p_k$
	\EndFor \label{algl:p2-fin-end}
\EndProcedure

\end{algorithmic}
\end{algorithm}

Algorithm \ref{alg:phase2} presents the pseudo-code of the proposed algorithm of Phase 2.
In Phase 2, each initiator repeatedly broadcasts a \mtag{Check} message to its neighbor initiators, to find the leader.
The \mtag{Check} message includes the smallest ID (denoted by $rID$) that the initiator ever knows and the distance to it.
When an initiator starts Phase 2, the initiator sends a \mtag{Check} message containing its ID as the minimum ID $rID$ to its all neighbor initiators (line \ref{algl:send-check-first}).
When the initiator receives \mtag{Check} messages, it updates its root, its distance, and its parent initiator (line \ref{algl:update-root}), if it finds a smaller ID or a smaller distance with the smallest ID it ever knows.
If there is some update on these variables, it sends the \mtag{Check} message with the updated information to all its neighbor initiators again (line \ref{algl:send-check}).
By repeating these broadcasts and updates, initiators construct a breadth-first spanning tree rooted at the initiator with the smallest ID.

This naive technique is widely used to find the leader in the distributed system.
However, this technique is hardly applicable when the diameter of the network is unknown, because the broadcast of the \mtag{Check} message has to be repeated as many times as the diameter of the network.
To resolve this difficulty, in the proposed algorithm, we allow an initiator $p_i$ to stop broadcasting \mtag{Check} and start convergecast toward the leader (the initiator currently knows), when the following conditions are satisfied (line \ref{algl:send-lt}):
(1) an initiator $p_i$ receives \mtag{Check} messages from its all neighbor initiators, and (2) there are no child initiators in the neighbors.
This implies that initiator $p_i$ is a leaf initiator of the tree rooted at the leader.
Even after an initiator begins the convergecast, the initiator stops it when the initiator receives a \mtag{Check} message from any neighbor initiator, and the initiator restarts the convergecast when the conditions above are satisfied.

The convergecast uses a \mtag{LocalTerm} message that is repeatedly sent from a leaf initiator to the root initiator (the leader) through the tree.
When the initiator receives a \mtag{LocalTerm} message, the initiator adds the sender's ID to its set variable $LT$ (line \ref{algl:update-lt}), which is a set variable that stores the IDs of the initiators from which the initiator received \mtag{LocalTerm} messages.
Therefore, the parent initiator (which has one or more child initiators) starts the convergecast when the initiator receives \mtag{LocalTerm} messages from all its child initiators (line \ref{algl:send-lt}).
The convergecast is terminated when the leader receives \mtag{LocalTerm} messages from all its neighbor initiators (note that all neighbor initiators of the leader eventually become the leader's children), and the leader broadcast \mtag{GlobalTerm} messages to finish Phase 2 (line \ref{algl:send-gt}).
This implies that to terminate the convergecast, all initiators have to start convergecasts, and this means all initiators have the same $rID$.
If some initiators start convergecasts with wrong information, e.g., the rID of the initiator is not the smallest ID, these initiators will stop the convergecast, and send \mtag{Check} messages again when they detect a smaller initiator ID (line \ref{algl:send-check}).
This wrong convergecast can be executed at most $d$ times, where $d$ is the diameter of the initiator network at the time when all the initiators in the initiator network are in Phase 2.

\subsection{Rollback Algorithm}
\label{sec:rollback}

Here, we describe the rollback algorithm of CPS algorithm.
Actually, the algorithm is the same as \emph{RB algorithm} of SSS algorithm \cite{Moriya2001,Moriya2005}; thus, we just introduce RB algorithm in our style below.

First, we give the overview of RB algorithm.
The rollback algorithm can be invoked anytime by any node, even if some node in its snapshot was leaved from the system.
When a rollback of a snapshot is triggered by a rollback initiator $p_i$, first $p_i$ sends a \mtag{RbMarker} message to every node in $p_i$'s $DS$ to determine its rollback group similar to SSS algorithm described briefly in Section \ref{sec:sss-algorithm}.
After the rollback group is determined, each node in the group first restores its state to the latest checkpoint\footnote{If the node has not stored any checkpoint yet, the node rolls back to its initial state.} and recovers every link of the node with the stored in-transit messages.
Then, the node resumes to the execution of its application.

We enumerate the variables and the message types that RB algorithm uses below.
They are mostly the same for those of CPS algorithm.
In the rollback algorithm, each node $p_i$ maintains the following variables.
\begin{itemize}
	\setlength{\itemsep}{0em}
    \item $RbInit_i$: Rollback initiator's ID.
		An initiator sets this variable as its own ID.
		A normal node (not initiator) sets this variable to the initiator ID of the first \mtag{RbMarker} message it receives.
		Initially \KwNull.
    \item $RbRcvMk_i$: A set of the IDs of the nodes from which $p_i$ (already) received \mtag{RbMarker} messages.
		Initially $\emptyset$.
    \item $RbMkList_i$: A set of the IDs of the nodes from which $p_i$ has to receive \mtag{RbMarker} messages to terminate the algorithm.
		Initially $\emptyset$.
    \item $RbFin_i$: A boolean variable that denotes whether the rollback group is determined or not.
		Initially \KwFalse.
    \item $RbMkFrom_i$ (Initiator only): A set of the IDs of the nodes that send \mtag{RbMarker} to its DS.
		Initially $\emptyset$.
    \item $RbMkTo_i$ (Initiator only): The union set of the DSes of the nodes in $RbMkFrom$.
		Initially $\emptyset$.
    \item $RbDSInfo_i$ (Initiator only): A set of the pairs of a node ID and its DS.
		Initially $\emptyset$.
\end{itemize}
The algorithm also uses the following Phase 1 variables:
\begin{itemize}
	\setlength{\itemsep}{0em}
    \item $DS_i$
	\item $MsgQ_i$
\end{itemize}
We use the following message type for the rollback algorithm.
\begin{itemize}
	\setlength{\itemsep}{0em}
	\item \msg{\mtag{RbMarker}, init}: A message which controls the timing of a rollback of the local state.
		Parameter $init$ denotes the initiator's ID.

	\item \msg{\mtag{RbMyDS}, DS}: A message to send its own DS (all nodes communication-related to this node) to its initiator.
	\item \msg{\mtag{RbOut}}: A message to cancel the current rollback algorithm.
		When a node who has been an initiator receives a \mtag{RbMyDS} message of the node's previous instance, the node sends this message to cancel the sender's rollback algorithm instance.
	\item \msg{\mtag{RbFin}, List}: A message to inform that its rollback group is determined.
		$List$ consists of the IDs of the nodes from which the node has to receive \mtag{RbMarker} messages to terminate the algorithm.

\end{itemize}

Algorithm \ref{alg:rollback} is the pseudo-code of the rollback algorithm.
As you can see, this is mostly the same as Algorithm \ref{alg:phase1-normaloperations}, but the algorithm is simpler than that.
This is because the rollback algorithm does not support concurrent rollbacks of multiple groups, which requires collision handling of these groups like CPS algorithm.

\begin{algorithm}[t]
	\caption{Pseudo code of CPS algorithm for node $p_i$ (Rollback)}
\label{alg:rollback}
\begin{algorithmic}[1]

\algfontsize

\Procedure{Initiate}{ }
	\State OnReceive(\msg{\mtag{RbMarker}, p_i})
\EndProcedure

\Statex
\Procedure{OnReceive}{\msg{\mtag{RbMarker}, p_x} from $p_j$}
	\If{$RbInit_i = \KwNull$}
		\State Stop the execution of its application
		\State $RbInit_i \gets p_x$, $RbRcvMk_i \gets RbRcvMk_i \cup \{p_j\}$
		\State $RbMkList_i \gets \emptyset$, $RbFin_i \gets \KwFalse$
		\State Send \msg{\mtag{RbMyDS}, DS_i} to $RbInit_i$
		\State Send \msg{\mtag{RbMarker}, p_x} to $\forall p_k \in DS_i$
	\ElsIf{$RbInit_i = p_x$}
		\State $RbRcvMk_i \gets RbRcvMk_i \cup \{p_j\}$
		\If{$RbFin_i = \KwTrue$}
			\State $\proc{CheckRbTermination}()$
		\EndIf
	\EndIf
\EndProcedure

\Statex
\Procedure{OnReceive}{\msg{\mtag{RbMyDS}, DS_j} from $p_j$}
	\If{$RbInit_i = \KwNull \vee RbFin_i = \KwTrue$}
		\State Send \msg{\mtag{RbOut}} to $p_j$
	\Else
		\State $RbMkFrom_i \gets RbMkFrom_i \cup \{p_j\}$
		\State $RbMkTo_i \gets RbMkTo_i \cup \{DS_j\}$
		\State $RbDSInfo_i \gets RbDSInfo_i \cup (p_j, DS_j)$

		\If{$RbMkTo_i \subseteq RbMkFrom_i$}
			\State \Comment{Initiator $p_i$ determines its rollback group}
			\State $RbFin_i \gets \KwTrue$
			\For{each $p_k \in RbMkFrom_i$}
				\State $RbMkList_k \gets \{ \forall p_x \mid p_k \in DS_x, (p_x, DS_x) \in RbDSInfo_i\}$
				\State Send \msg{\mtag{RbFin}, RbMkList_k} to $p_k$
			\EndFor
		\EndIf

	\EndIf
\EndProcedure

\Statex
\Procedure{OnReceive}{\msg{\mtag{RbOut}} from $p_j$}
	\State \Comment{Cancel this rollback algorithm}
	\State $RbInit_i \gets \KwNull$
\EndProcedure

\Statex
\Procedure{OnReceive}{\msg{\mtag{RbFin}, List} from $p_j$}
	\State $RbMkList_i \gets List$
	\State \Comment{My initiator notifies the determination of its rollback group}
	\State $RbFin_i \gets \KwTrue$
	\State $\proc{CheckRbTermination}()$
\EndProcedure

\Statex
\Procedure{CheckRbTermination()}{}
	\If{$RbMkList_i \subseteq RbRcvMk_i$}
		\State Restore its state to the latest checkpoint
		\State Restore in-transit messages stored with the checkpoint to its links
		\For{each $(p_j, m)$ in $MsgQ_i$}

			\If{$p_j \notin RbMkList_i$}
				\State Add $m$ into the corresponding link
			\EndIf
		\EndFor
		\State Resume the execution of its application.
		\State Terminate this rollback algorithm
	\EndIf
\EndProcedure

\end{algorithmic}
\end{algorithm}

\color{black}

\section{Correctness}
\label{sec:correctness}

In this section, we show the correctness of the proposed algorithm.
First, we show the consistency of the recorded checkpoints (the snapshot).
The consistency of the snapshot can be guaranteed by the following conditions:
(a) the recorded checkpoints are mutually concurrent, which means that no causal relation, e.g., message communications, exists between any two checkpoints, and (b) in-transit messages are correctly recorded.

We denote the $k$-th event of node $p_i$ as $e^k_i$.
$S_i$ denotes the recorded checkpoint of node $p_i$.
When a snapshot algorithm correctly terminates, $S_i$ is updated to the latest checkpoint, and the previous recorded checkpoint is discarded.
Thus, $S_i$ is uniquely defined, if $p_i$ recorded its local state at least once.
From the proposed algorithm (and many other snapshot algorithms using \emph{Marker}), $S_i$ is usually created when the node receives the first \mtag{Marker}.

\begin{Define}
(A causal relation) $e^n_i \prec e^m_j$ denotes that $e^n_i$ causally precedes $e^m_j$.
This causal relation is generated in three cases:
(1) $e^n_i$ and $e^m_j$ are two internal computations on the same node ($i = j$) and $n < m$.
(2) $e^n_i$ and $e^m_j$ are the sending and the receiving events of a message, respectively.
(3) $e^n_i \prec e^l_k$ and $e^l_k \prec e^m_j$ (transitive).
\end{Define}

Now we show the following lemma using the notation and definition above.
\begin{Lemma}
\label{lem:checkpoint-concurrent}
For any two checkpoints $S_i$ and $S_j$ recorded at distinct nodes $p_i$ and $p_j$ by the proposed algorithm, neither $S_i \prec S_j$ nor $S_j \prec S_i$ holds (or they are concurrent).
\end{Lemma}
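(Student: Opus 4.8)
The plan is to argue by contradiction: suppose without loss of generality that $S_i \prec S_j$ for two distinct nodes $p_i$ and $p_j$. By Definition 1, a causal chain from $S_i$ to $S_j$ is built from per-node program order and message send/receive pairs, so there exist nodes $p_i = q_0, q_1, \ldots, q_r = p_j$ and a chain of events in which, for each consecutive pair $q_\ell, q_{\ell+1}$, some message $m_\ell$ is sent by $q_\ell$ \emph{after} (or at) its checkpoint $S_{q_\ell}$ and received by $q_{\ell+1}$ \emph{before} its checkpoint $S_{q_{\ell+1}}$. The strategy is to show this last situation — a message sent post-checkpoint by its sender but received pre-checkpoint by its receiver — is impossible in CPS algorithm, which immediately collapses the chain and contradicts $i \ne j$.

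First I would establish the local structure: by the description of Phase 1 and Algorithm \ref{alg:phase1-normaloperations}, a node records $S_k$ exactly when it receives its first \mtag{Marker} (or initiates), at which moment it sets $init_k \ne \KwNull$ and, by Algorithm \ref{alg:phase1basic} (lines \ref{algl:send-msg-begin}--\ref{algl:send-msg-end}), every subsequent application message it sends to a node not yet in $pDS_k \cup DS_k$ is preceded by a \mtag{Marker}. Hence, if $q_\ell$ sends $m_\ell$ after recording $S_{q_\ell}$, then either (i) $q_{\ell+1}$ had already sent a \mtag{Marker} toward $q_\ell$ earlier (so $q_{\ell+1} \in pDS_{q_\ell} \cup DS_{q_\ell}$ at send time, which by the FIFO property and the communication-relation tracing means $q_{\ell+1}$ has itself already received a \mtag{Marker} and recorded $S_{q_{\ell+1}}$), or (ii) $q_\ell$ sends a fresh \mtag{Marker} to $q_{\ell+1}$ ahead of $m_\ell$ on that link. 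In case (ii), FIFO on the link $q_\ell \to q_{\ell+1}$ forces the \mtag{Marker} to arrive before $m_\ell$, so $q_{\ell+1}$ records $S_{q_{\ell+1}}$ (its first \mtag{Marker} arrives no later than this one) strictly before receiving $m_\ell$. Either way, $m_\ell$ is received by $q_{\ell+1}$ \emph{after} $S_{q_{\ell+1}}$, contradicting the defining property of the causal chain. I would also note the collision case explicitly: when a \mtag{Marker} carries a foreign initiator, the receiver still records its state on the \emph{first} \mtag{Marker} regardless of which group it belongs to, so the argument does not depend on the two endpoints sharing an initiator; and the \mtag{Accept}/\mtag{Marker} exchange in Algorithm \ref{alg:collision-handling} (line \ref{algl:send-marker-for-collision}) together with the orphan-message discussion in the footnote to Figure \ref{fig:collision-example} covers precisely the boundary case where $q_\ell, q_{\ell+1}$ were not originally communication-related.

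The main obstacle I anticipate is handling the case where the chain is not a single message hop but a genuinely transitive chain passing through internal events on intermediate nodes, combined with the dynamic/concurrent features: a node may send a message to a node \emph{not} in its $DS$ mid-execution, snapshot instances may be cancelled via \mtag{Out}/\mtag{RbOut}, and a node's $init$ can be \KwNull{} at the moment it sends the first message of the chain. The careful part is to show that ``$S_{q_\ell}$ precedes the send of $m_\ell$'' propagates correctly: if $q_\ell$ had $init_{q_\ell} = \KwNull$ when it sent $m_\ell$, then $m_\ell$ cannot causally follow $S_{q_\ell}$ unless $S_{q_\ell}$ is recorded later by a \mtag{Marker} that, by construction, is forwarded to $DS_{q_\ell}$ — but then $q_{\ell+1}$ was reached by that forwarded \mtag{Marker} (or a \mtag{Marker} sent ahead of a message per Algorithm \ref{alg:phase1basic}) before $m_\ell$ by FIFO. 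I would organize the write-up as an induction on the length $r$ of the causal chain, with the single-hop analysis above as the base case and the inductive step peeling off the last hop $q_{r-1} \to q_r$, using the induction hypothesis to place $S_{q_{r-1}}$ before the relevant send event on $q_{r-1}$ and then re-running the FIFO/\mtag{Marker}-precedence argument to force $S_{q_r}$ before the receipt on $q_r = p_j$, contradicting $S_i \prec S_j$.
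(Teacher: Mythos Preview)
Your proposal is essentially the paper's own proof: assume $S_i \prec S_j$, extract the message chain $m_1,\ldots,m_k$ witnessing the causal relation, and argue via FIFO that a \mtag{Marker} is sent on each link ahead of the corresponding $m_\ell$ (because a node broadcasts \mtag{Marker} to its $pDS$ at checkpoint time and, by Algorithm~\ref{alg:phase1basic}, prepends a \mtag{Marker} to any post-checkpoint send toward a new node), so $p_j$ must have checkpointed before receiving $m_k$. The paper presents this without your explicit induction on chain length and with a terser same-initiator/different-initiator case split, but the structure and the key FIFO/\mtag{Marker}-precedence step are identical; one small wording fix---your case~(i) should say that $q_\ell$ has already sent a \mtag{Marker} to $q_{\ell+1}$ (the paper's case~(a)), not the reverse.
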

\begin{proof}
For contradiction, we assume $S_i \prec S_j$ holds without loss of generality.
It follows that a message chain $m_1, m_2, \cdots, m_k$ ($k \geq 1$) exists such that $m_1$ is sent by $p_i$ after $S_i$, $m_l$ is received before sending $m_{l+1}$ ($1 \leq l < k$) at a node, and $m_k$ is received by $p_j$ before $S_j$.

If $S_i$ and $S_j$ are recorded by \mtag{Markers} from the same initiator, we can show that \mtag{Marker} is sent along the same link before each $m_l$.
This is because \mtag{Marker} is (a) sent to every communication-related node when a node records a checkpoint, and (b) sent to a communication-irrelated node before a message is sent to the node (which becomes communication-related).
Therefore, $p_j$ records its checkpoint at the latest before it receives $m_k$, which is a contradiction.

Even if $S_i$ and $S_j$ are recorded by \mtag{Markers} from two different initiators, $p_x$ and $p_y$, respectively, \mtag{Marker} from $p_x$ is received by $p_j$ before 
the receipt of $m_k$ for the same reason as above.
Thus, $p_j$ never records its checkpoint, when \mtag{Marker} from $p_y$ is received by it (a collision occurs).

Therefore, Lemma \ref{lem:checkpoint-concurrent} holds.
\end{proof}

Next, we present the following lemma about the recorded in-transit messages.
\begin{Lemma}
\label{lem:in-transit-msg}
A message $m$ sent from $p_i$ to $p_j$ is recorded as an in-transit message by $p_j$, if and only if $m$ is sent before $S_i$ and received after $S_j$.
\end{Lemma}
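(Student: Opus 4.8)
The plan is to unfold ``$m$ is recorded as an in-transit message by $p_j$'' into the two algorithmic events that produce it and then match each against the two clauses of the statement. By the receive action of Algorithm~\ref{alg:phase1basic}, the pair $(p_i,m)$ is appended to $MsgQ_j$ exactly when $m$ arrives while $init_j\neq\KwNull$ and $p_i\notin RcvMk_j$; and by \proc{CheckTermination} in Algorithm~\ref{alg:phase1-normaloperations}, that queued pair is copied into the stored in-transit set (line~\ref{algl:store-msgq}) iff $p_i\in MkList_j$ holds when $p_j$ terminates the instance. Thus $m$ is recorded as in-transit iff (C1) $m$ arrives with $init_j\neq\KwNull$ and $p_i\notin RcvMk_j$, and (C2) $p_i\in MkList_j$ at termination, and the task is to show that ``(C1) and (C2)'' is equivalent to ``$m$ sent before $S_i$ and received after $S_j$''.

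Two of the four matchings are routine. First, $p_j$ records $S_j$ within the same event handler that first sets $init_j$ away from \KwNull\ (line~\ref{algl:store-localstate}), so ``$init_j\neq\KwNull$ when $m$ arrives'' is the same as ``$m$ received after $S_j$''. Second, the marker-ordering invariant already used in the proof of Lemma~\ref{lem:checkpoint-concurrent} and illustrated by Figure~\ref{fig:orphan-example} says that $p_i$ sends a \mtag{Marker} to $p_j$ only after recording $S_i$, and that before $p_i$ sends $p_j$ any application message that would otherwise be an orphan it first sends a \mtag{Marker} (Algorithm~\ref{alg:phase1basic}); combined with FIFO this gives that $p_j$ receives $m$ strictly before receiving any \mtag{Marker} from $p_i$ iff $m$ was sent before $S_i$. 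So the condition ``$p_i\notin RcvMk_j$ when $m$ arrives'' in (C1) is ``$m$ sent before $S_i$'', provided we also know that $p_i$ does eventually send a \mtag{Marker} toward $p_j$; that last fact is supplied by (C2), because \proc{CheckTermination} requires $MkList_j\subseteq RcvMk_j$, so $p_i\in MkList_j$ forces a \mtag{Marker} from $p_i$ to reach $p_j$ before $p_j$ terminates.

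The remaining ingredient, and the one I expect to be the main obstacle, is pinning down $MkList_j$: I need to show that the list a \mtag{Fin} message carries to $p_j$ in \proc{FinishPhase2} (Algorithm~\ref{alg:phase2}) contains $p_i$ exactly when $p_i$ communicated with $p_j$ before $S_i$ (equivalently, when $p_j\in pDS_i$). When no collision has separated the two nodes this is direct: $p_j\in pDS_i$ iff $p_i$ had $p_j$ in its dependency set by the time it took $S_i$, $p_i$ reports $(p_i,pDS_i)$ to the common initiator in a \mtag{MyDS} message, the initiator stores it in $DSInfo$, and \proc{FinishPhase2} then inserts $p_i$ into $MkList_j$ iff $p_j\in pDS_i$ --- matching the marker-ordering fact exactly. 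When a collision lies between $p_i$'s group and $p_j$'s group, I would trace the \mtag{NewInit}/\mtag{Link}/\mtag{Ack}/\mtag{Accept} exchange of Algorithm~\ref{alg:collision-handling}: it adds the relevant endpoint to a $MkFrom$ set, records a singleton dependency in a $DSInfo$ set, and triggers the extra \mtag{Marker} on line~\ref{algl:send-marker-for-collision}, and one must check that after this $p_i\in MkList_j$ holds precisely in the cases where $m$ precedes that extra \mtag{Marker} (so $m$ really does enter $MsgQ_j$), while in the cases where a \mtag{Marker} from $p_i$ --- the collision \mtag{Marker}, or one sent earlier by the ``before send'' rule --- precedes $m$, the pair $(p_i,m)$ never enters $MsgQ_j$ at all. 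This bookkeeping is delicate because a node may be touched by the collision handling of two distinct initiators and can thus see more than one \mtag{Fin}; the argument must confirm that all such \mtag{Fin} messages agree on whether $p_i$ is present, so that the value of $MkList_j$ relevant to this lemma is well defined.

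Assembling the pieces gives both directions. ($\Leftarrow$) If $m$ is sent before $S_i$ and received after $S_j$, then $init_j\neq\KwNull$ and no \mtag{Marker} from $p_i$ has yet reached $p_j$ when $m$ arrives, so $(p_i,m)$ enters $MsgQ_j$; and $p_j\in pDS_i$, so by the $MkList$ analysis $p_i\in MkList_j$, whence \proc{CheckTermination} records $m$ as in-transit. ($\Rightarrow$) If $m$ is recorded, then $(p_i,m)\in MsgQ_j$ forces ``$m$ received after $S_j$'' and ``$p_i\notin RcvMk_j$ at arrival'', while $p_i\in MkList_j$ forces a \mtag{Marker} from $p_i$ to reach $p_j$ eventually; FIFO and the marker-ordering invariant then force that \mtag{Marker} to have been sent after $m$, hence $m$ to have been sent before $S_i$.
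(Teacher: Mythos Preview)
Your decomposition into (C1) and (C2) is more explicit than the paper's argument, and the two ``routine'' matchings you give are correct. The paper organises the proof differently: it does not unpack the recording condition but instead splits directly on whether $S_i$ and $S_j$ arise from the same initiator, and in the different-initiator case on whether $p_j$'s initiator $p_y$ has already determined its snapshot group when the collision is reported. Your structural approach and the paper's operational case split would both work, but only once the hard case is handled correctly.

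The gap is precisely where you flag the bookkeeping as delicate, and your framing of what must be checked there is misdirected. You propose to ``confirm that all such \mtag{Fin} messages agree on whether $p_i$ is present, so that the value of $MkList_j$ relevant to this lemma is well defined.'' They need not agree. Take the case where $m$ is the first communication between $p_i$ and $p_j$, sent before $S_i$ and received after $p_j$ has recorded a checkpoint for initiator $p_y$. Then $p_j\in pDS_i$ but $p_i\notin pDS_j$. If $p_y$ has already determined its group (without $p_i$) by the time the collision surfaces, the \mtag{Fin} that $p_y$ sends to $p_j$ does \emph{not} contain $p_i$, so your (C2) fails and $m$ is \emph{not} recorded as in-transit in that instance---contradicting the equivalence you are aiming for, since on your reading $m$ was sent before $S_i$ and received after $S_j$.

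What actually rescues the lemma, and what the paper's proof supplies, is that $p_j$ stores the colliding \mtag{Marker} in $CollidedNodes_j$ and, upon terminating the first instance, \proc{ReProcessMarker} replays it with $init_j=\KwNull$; $p_j$ then records a \emph{new} checkpoint $S'_j$ in $p_x$'s group. Under the paper's convention that $S_j$ denotes the \emph{latest} checkpoint (stated just before Lemma~\ref{lem:checkpoint-concurrent}), the hypothesis ``received after $S_j$'' now fails, because $m$ was already delivered to the application and is part of the state captured in $S'_j$. So the resolution is not that the several $MkList_j$'s agree, but that the earlier checkpoint is superseded; your plan needs to track which instance each of $MsgQ_j$, $MkList_j$, and $S_j$ refers to, and to recognise that they may come from different instances.
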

\begin{proof}
\textbf{(only if part)}
A message $m$ from $p_i$ to $p_j$ is recorded as an in-transit message by $p_j$ only when it is received after $S_j$, but before \mtag{Marker} from $p_i$.
\mtag{Marker} is sent from $p_i$ to $p_j$ immediately after $S_i$; thus, the above implies from the FIFO property of the communication link that $m$ is sent before $S_i$.
The only if part holds.

\textbf{(if part)}
Let $m$ be the message that is sent from $p_i$ before $S_i$, and received by $p_j$ after $S_j$.
First, we assume that $S_i$ and $S_j$ are recorded on receipt of \mtag{Marker}s from the same initiator (i.e., they are in the same partial snapshot group).
Because $m$ is sent before $S_i$, $p_i$ adds $p_j$ to its $DS_i$, and then $p_i$ sends \mtag{Marker} to $p_j$ when $S_i$ is recorded (i.e., when the first \mtag{Marker} is received).
Node $p_i$ sends not only \mtag{Marker} but also its $DS_i$ to its initiator.
This implies when the snapshot group is determined, $p_i$ is included in $MkList_j$, which is the set of the IDs of the nodes from which $p_j$ has to receive \mtag{Markers}.
Therefore, $p_j$ cannot terminate the algorithm, until $p_j$ receives \mtag{Marker} from $p_i$.
Because $m$ is received by $p_j$ before \mtag{Marker} from $p_i$ (due to the FIFO property), $m$ is always recorded as an in-transit message.

Next, we assume that $S_i$ and $S_j$ are recorded on receipt of \mtag{Markers} from different initiators (denoted by $p_x$ and $p_y$, respectively).
In this case, when $p_j$ receives \mtag{Marker} from $p_i$ ($p_i$ has to send \mtag{Marker} to $p_j$ when it records $S_i$), it sends \mtag{NewInit} to its initiator $p_y$ because it detects a collision.
We have to consider the following two cases when $p_y$ receives \mtag{NewInit} from $p_j$.
Note that, at this time, $p_x$ has not determined its snapshot group, because $p_j$ is included in $DS_i$, and $p_x$ has not received $DS_j$ yet.

(1) \textbf{$p_y$ has not determined its snapshot group}:
$p_y$ sends \mtag{Link} to $p_x$, and a virtual link between the two nodes is created in the initiator network.
This causes $p_i$ to be added to $MkList_j$, when $p_y$ determines its snapshot group.
Because $p_i \in MkList_j$, $p_j$ has to wait for \mtag{Marker} from $p_i$, and records $m$ as an in-transit message.

(2) \textbf{$p_y$ already determined its snapshot group}:
If $p_i$ is in the snapshot group of $p_y$, we can show with an argument similar to (1) that $m$ is recorded as an in-transit message.
If $p_i$ is not in $p_y$'s snapshot group, then the snapshot group is determined using $DS_j$ that does not contain $p_i$.
This implies $p_j$ never sends \mtag{Marker} to $p_i$, when checkpoint $S_j$ is recorded.
In this case, because $p_y$ has already sent a \mtag{Fin} message to $p_j$ before the receipt of \mtag{NewInit}, $p_j$ never records $m$ in $S_j$, because $p_i$ is not included in $MkList_j$.
However, in this case, $p_j$ records a new checkpoint, say $S'_j$, on receipt of \mtag{Marker} from $p_i$ that was sent when $S_i$ is recorded, and receives $m$ before $S'_j$.
As a result, $m$ is not an in-transit message, and is never recorded in $S_j$ or $S'_j$.
\end{proof}

Lemmas \ref{lem:checkpoint-concurrent} and \ref{lem:in-transit-msg} guarantee the consistency of the recorded checkpoints and in-transit messages by the proposed algorithm.
Now we discuss about the termination of Phase 1 using the following lemma.

\begin{Lemma}
\label{lem:p1-termination}
Every initiator eventually terminates Phase 1 and proceeds to Phase 2.
\end{Lemma}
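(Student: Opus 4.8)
The plan is to fix an arbitrary initiator $p_a$ and show that within finite time its local flag $fin_a$ is set to $\KwTrue$. In Algorithm~\ref{alg:phase1-normaloperations} this assignment happens only inside $\proc{CanDetermineSG}()$, which in turn immediately calls $\proc{StartPhase2}()$; hence it suffices to prove that $\proc{CanDetermineSG}()$ is eventually invoked in a configuration where its guard $MkTo_a \subseteq MkFrom_a \wedge Wait_a = \emptyset$ holds. I would derive this from three ingredients, established in order: (i) the \mtag{Marker} flooding triggered (transitively) from $p_a$ touches only finitely many nodes and quiesces; (ii) $MkFrom_a$ and $MkTo_a$ stabilise to values with $MkTo_a \subseteq MkFrom_a$, while every triple ever inserted into $Wait_a$ is eventually deleted; and (iii) the guard is actually re-evaluated once both conjuncts have become true.

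For (i) and the inclusion $MkTo_a\subseteq MkFrom_a$ part of (ii) I would reuse the SSS-style reasoning. Each node processes its \emph{first} \mtag{Marker} at most once per instance, at which point it records its state, sends one \msg{\mtag{MyDS}, pDS} to its initiator, and forwards \mtag{Marker} to the finite set $pDS$; since the node set is finite and each $DS$ only grows — bounded by the node set, even under the ``send a \mtag{Marker} before the first application message to a new neighbour'' rule of Algorithm~\ref{alg:phase1basic} — the set of nodes that ever carry $p_a$ as their initiator is finite, and all of them report to $p_a$. The variables $MkFrom_a$ and $MkTo_a$ are monotone non-decreasing and bounded, hence stabilise; in the limit $MkTo_a \subseteq MkFrom_a$, because every node occurring in a recorded $DS$ has itself reported — directly via \mtag{MyDS}, or, if it belongs to a neighbouring group, it is folded into $MkFrom_a$ with a singleton entry placed in $DSInfo_a$ by the \mtag{NewInit}/\mtag{Link}/\mtag{Ack} handlers.

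For the $Wait_a$ part of (ii): only finitely many \mtag{Marker}s are sent, hence only finitely many collisions are detected, only finitely many \mtag{NewInit} messages reach $p_a$, and therefore only finitely many \mtag{Link} messages travel between $p_a$ and any other initiator $p_b$; by link reliability each elicits an \mtag{Ack}, a \mtag{Deny}, or — when $p_a$ is already in $N_b$ with $fin_b=\KwFalse$ — no reply. I would then check that for each $p_b$ every triple $(\cdot,\cdot,p_b)$ put into $Wait_a$ is removed: a \mtag{Deny} deletes the matching triple (and, as $p_b\notin N_a$ in that case, re-invokes $\proc{CanDetermineSG}()$), whereas the first \mtag{Ack} received from $p_b$ runs $\proc{AcceptCollidedNodes}(p_b)$, which deletes \emph{all} triples $(\cdot,\cdot,p_b)$ then present; and once $p_b$ enters $N_a$ (a set that is never shrunk) the \mtag{NewInit} handler inserts no further triple for $p_b$. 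Hence $Wait_a$ becomes, and remains, empty.

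The main obstacle is (iii): guaranteeing that the guard is tested at a moment when it holds. The delicate branches are the two that update $MkFrom_a$ or $MkTo_a$ without re-calling $\proc{CanDetermineSG}()$ — the $p_b\in N_a$ case of the \mtag{NewInit} handler and the $p_a\in N_b$ case of the \mtag{Link} handler — and the hard part will be to show that neither can be the event that \emph{first} makes $MkTo_a \subseteq MkFrom_a \wedge Wait_a=\emptyset$ true. I would do this by maintaining an invariant tying together $MkFrom_a$, $MkTo_a$, $N_a$ and the messages in transit: whenever a node of a neighbouring group appears in $MkTo_a$, either it already lies in $MkFrom_a$, or a \mtag{Link} (respectively \mtag{MyDS}) that will complete its registration is still in flight and will be processed in a branch that does invoke $\proc{CanDetermineSG}()$. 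With such an invariant, the last Phase~1 event at $p_a$ evaluates the guard with both conjuncts true, so $fin_a$ is set and $p_a$ proceeds to Phase~2. Finally, the residual cases — an instance aborted through an \mtag{Out} message (handled by $\proc{ReProcessMarker}()$ and a restart) and stale messages from earlier instances (separated out by the snapshot-instance identifiers) — are routine and can be dispatched briefly.
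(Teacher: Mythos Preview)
Your approach follows the same overall structure as the paper's proof: both reduce the claim to showing that the two conjuncts of the guard in $\proc{CanDetermineSG}()$ --- namely $MkTo_a \subseteq MkFrom_a$ and $Wait_a = \emptyset$ --- eventually become and remain true, and both treat the two conjuncts separately. The paper argues each by contradiction (assuming some $p_x \in MkTo_a \setminus MkFrom_a$ persists, respectively that some triple in $Wait_a$ is never removed) and disposes of your step~(iii) in one line, asserting that ``whenever $MkTo_i$, $MkFrom_i$, or $Wait_i$ is updated, an initiator executes procedure $\proc{CanDetermineSG}()$.'' Your proposal is more careful on exactly this point: you correctly observe that the assertion is not literally borne out by the pseudocode --- the $p_b \in N_a$ branch of the \mtag{NewInit} handler and the $p_a \in N_b$ branch of the \mtag{Link} handler update $MkFrom$/$MkTo$ without re-testing the guard --- and you propose to handle those branches via an explicit invariant. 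Similarly, your treatment of $Wait_a$ catches the case where a second \mtag{Link} to an already-linked initiator elicits no reply, relying instead on $\proc{AcceptCollidedNodes}(p_b)$ to sweep all pending $p_b$-triples at once; the paper's proof simply states that every \mtag{Link} receives an \mtag{Ack} or \mtag{Deny}. So the decomposition is the same, but your execution is more rigorous on edge cases the paper treats loosely; what the paper's version buys is brevity, at the cost of glossing over precisely the branches you flag as delicate.
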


\begin{proof}
To terminate Phase 1 (and start Phase 2), each initiator has to execute procedure \proc{CanDetermineSG()} (lines \ref{algl:candeterminesg-begin} to \ref{algl:candeterminesg-end} in Algorithm \ref{alg:phase1-normaloperations}) and satisfies two conditions (line \ref{algl:candeterminesg-condition} in Algorithm \ref{alg:phase1-normaloperations}):
(1) $MkTo_i$ is a subset of or equal to $MkFrom_i$ and (2) $Wait_i$ is an empty set.
Note that whenever $MkTo_i$, $MkFrom_i$, or $Wait_i$ is updated, an initiator executes procedure \proc{CanDetermineSG()} (refer Algorithm \ref{alg:phase1-normaloperations}).
Therefore, if any initiator cannot terminate Phase 1, it implies that, two conditions are not satisfied and the variables in the two conditions are never updated (i.e., deadlock), or the two conditions are never satisfied forever even if they are repeatedly updated (i.e., livelock).

(1) \textbf{Condition $MkTo_i \subseteq MkFrom_i$}:
Assume for contradiction that $MkFrom_i \subset MkTo_i$ and no more update occurs.
Let $p_x$ be the node that is included in its initiator $p_i$'s $MkTo_i$, but not in $MkFrom_i$.
This means that $p_x$ received (or will receive) a \mtag{Marker} message from the node whose DS contains $p_x$.
When $p_x$ receives the \mtag{Marker} message, $p_x$ does one of the following (lines \ref{algl:marker-begin} to \ref{algl:marker-end} in Algorithm \ref{alg:phase1-normaloperations}):
(a) If it is the first \mtag{Marker} message (lines \ref{algl:p1-case-a-begin} to \ref{algl:p1-case-a-end}), $p_x$ sends its $DS_x$ to its initiator $p_i$, which is a contradiction.
(b) If it is the second or later \mtag{Marker} message (lines \ref{algl:p1-case-b-begin} to \ref{algl:p1-case-b-end}), $p_x$ already sent its $DS_x$ to its initiator $p_i$ when $p_x$ received the first \mtag{Marker} message, this is also a contradiction.
(c) If a collision happens (lines \ref{algl:p1-case-c-begin} to \ref{algl:p1-case-c-end}), we must take care with $MkFrom$ of two initiators, $p_x$'s initiator $p_i$ and the opponent collided initiator, say $p_j$.
For the initiator $p_i$, when $p_x$ receives a collided \mtag{Marker}, $p_x$ sends a \mtag{NewInit} message to its initiator $p_i$.
This implies that $p_x$ processed the case (a) to recognize $p_i$ as its initiator before, and the case (a) contradicts the assumption as we proved.
For the opponent initiator $p_j$, when $p_i$ receives the \mtag{NewInit} message, the initiator sends a \mtag{Link} message, which leads $p_x \in MkFrom_j$ (line \ref{algl:link-mkfrom-update} of Algorithm \ref{alg:collision-handling}).
This also contradics the assumption.

(2) \textbf{Condition $Wait_i = \emptyset$}:
Assume for contradiction that there is an element in $Wait_i$, and the element is never removed from $Wait_i$.
Note that an element can be added to $Wait_i$ only when a collision occurs for the first time between two snapshot groups (line \ref{algl:wait-add} in Algorithm \ref{alg:collision-handling}).
Therefore, when an initiator $p_i$ adds an element to $Wait_i$, $p_i$ also sends a \mtag{Link} message to the opponent colliding initiator $p_j$.
The initiator $p_j$ sends either an \mtag{Ack} message or a \mtag{Deny} message as its reply (lines \ref{algl:link-begin} to \ref{algl:link-end} in Algorithm \ref{alg:collision-handling}).
Both of these two messages cause the corresponding element to remove from $Wait_i$; thus, each element in $Wait_i$ is removed eventually.
This is a contradiction.
Note that if once two distinct initiators are connected in an initiator network by exchanging \mtag{Link} and \mtag{Ack} messages, they never add the opponent initiator in their $Wait$ each other.
If a \mtag{Deny} message is sent as the reply, the collision never occurs again between the two collided nodes.
Therefore, an element is added to $Wait_i$ only a finite number of times, because the total number of the nodes in the system is finite.
\end{proof}

From Lemmas \ref{lem:checkpoint-concurrent} to \ref{lem:p1-termination}, the following theorem holds.
\begin{Theorem}
\label{thm:p1-correctness}
	Phase 1 eventually terminates, and all checkpoints and in-transit messages recorded by the proposed algorithm construct a consistent snapshot of the subsystem.
\end{Theorem}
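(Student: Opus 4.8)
The theorem is a straightforward composition of the three lemmas, but I should be careful about what still needs to be established, namely: (i) the termination of Phase 2 (Lemma~\ref{lem:p1-termination} only gives termination of Phase 1), (ii) the termination of the whole algorithm, i.e., that every node eventually leaves Phase 1 for good after being told its $MkList$, and (iii) that the checkpoints actually exist at exactly the nodes that should be in the subsystem, so that ``a consistent snapshot of the subsystem'' is well-defined. The plan is therefore to chain the lemmas for consistency, then add an explicit termination argument for Phase~2 and for the node-level termination check.

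\smallskip\noindent\textbf{Step 1 (Termination of the whole computation).} By Lemma~\ref{lem:p1-termination}, every initiator eventually finishes Phase~1 and enters Phase~2. I would first argue that Phase~2 terminates: each initiator starts with $rID_i = p_i$, and the only way $rID_i$ changes is via \proc{OnReceive}(\msg{\mtag{Check}, \ldots}), which strictly decreases $(rID_i, dist_i)$ in lexicographic order; since IDs are drawn from a totally ordered set and there are finitely many initiators, the $rID$ values stabilize, after which no further \mtag{Check} messages are sent. Once all initiators share the minimum $rID$, the parent pointers form a BFS tree rooted at the leader, every leaf satisfies the convergecast precondition ($CK_i = N_i \wedge Child_i = \emptyset$) and sends \mtag{LocalTerm}, and the convergecast propagates to the root in finitely many steps (at most the diameter $d$ of the final initiator network, as noted in Section~\ref{sec:phase2-overview}); the root then broadcasts \mtag{GlobalTerm} and every initiator calls \proc{FinishPhase2}. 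At that point each initiator sends \msg{\mtag{Fin}, MkList_k} to every $p_k \in MkFrom_i$. A non-initiator node $p_k$ sets $fin_k \gets \KwTrue$ on receiving \mtag{Fin} and calls \proc{CheckTermination}; since (by the construction of $MkList_k$ via $DSInfo$) every node whose $DS$ contains $p_k$ eventually sends a \mtag{Marker} to $p_k$, and links are reliable, $MkList_k \subseteq RcvMk_k$ holds in finite time and $p_k$ terminates. Hence Phase~1 (and the whole algorithm) eventually terminates at every participating node.

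\smallskip\noindent\textbf{Step 2 (Consistency).} The recorded configuration consists of the checkpoints $\{S_i\}$ over all nodes $p_i$ that received a \mtag{Marker}, together with the messages recorded in line~\ref{algl:store-msgq}. By Lemma~\ref{lem:checkpoint-concurrent}, no two distinct recorded checkpoints are causally related, so the set $\{S_i\}$ forms a consistent cut in the sense of Definition~1 (this handles condition~(a) from the start of Section~\ref{sec:correctness}, even across different initiators and collisions). By Lemma~\ref{lem:in-transit-msg}, a message is recorded as in-transit exactly when it crosses the cut from past to future, so every orphan-free, genuinely in-transit message is captured and nothing spurious is (condition~(b)). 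Together these give that $\{S_i\}$ plus the recorded messages is a consistent snapshot; restricting attention to the nodes reachable through the communication-relation from the initiator(s) — which is precisely $MkFrom$ aggregated over the colliding initiators — identifies it as a snapshot of the subsystem.

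\smallskip\noindent\textbf{Main obstacle.} The routine part is invoking the two consistency lemmas; the delicate part is the termination argument for Phase~2, specifically showing that the early-stopping heuristic for the convergecast (leaves starting convergecast before they ``know'' the diameter) cannot livelock. The key observation to nail down is that a premature convergecast is always aborted by the arrival of a \mtag{Check} carrying a smaller $rID$, and that $rID_i$ can only decrease finitely often; once it is globally stable, no \mtag{Check} is ever sent again, so the final convergecast runs to completion uninterrupted. I would also want to be explicit that $N_i$ has stabilized by the time an initiator is in Phase~2 — which follows because virtual links are only added during Phase~1 (when $fin = \KwFalse$), and Lemma~\ref{lem:p1-termination} plus the \mtag{Deny}/\mtag{Ack} handshake analysis in its proof guarantees finitely many link additions — otherwise the BFS-tree stabilization claim would not be well-founded.
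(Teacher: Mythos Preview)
Your Step~2 is exactly the paper's argument: the paper's entire ``proof'' of this theorem is the single sentence preceding it, ``From Lemmas~\ref{lem:checkpoint-concurrent} to~\ref{lem:p1-termination}, the following theorem holds.'' So for consistency you are doing precisely what the paper does --- invoking Lemma~\ref{lem:checkpoint-concurrent} for condition~(a) and Lemma~\ref{lem:in-transit-msg} for condition~(b) --- and for termination of Phase~1 you are invoking Lemma~\ref{lem:p1-termination}, again just as the paper does.

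Your Step~1, however, goes well beyond what this theorem asserts or what the paper proves here. The theorem only claims that \emph{Phase~1} eventually terminates, not Phase~2 and not the node-level \proc{CheckTermination}. The paper deliberately factors those concerns out into the separate Theorem~\ref{thm:p2-safety} and Lemma~\ref{lem:convergecast-safety} (and, as you may have noticed, the paper never actually proves \emph{liveness} of Phase~2 --- Lemma~\ref{lem:convergecast-safety} is a safety statement). So your Phase~2 termination argument and the stabilization-of-$N_i$ remark are not wrong, and in fact fill a gap the paper leaves open, but they are not part of the paper's proof of this theorem; you have read more into the statement than it contains. If you want to match the paper's decomposition, strip Step~1 down to a one-line citation of Lemma~\ref{lem:p1-termination} and move the rest to the discussion of Theorem~\ref{thm:p2-safety}.
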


Now, we prove the following theorem regarding the correctness of Phase 2.
\begin{Theorem}
	\label{thm:p2-safety}
	Every initiator in an initiator network terminates, after all of the initiators in the network determine their snapshot groups.
\end{Theorem}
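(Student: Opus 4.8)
The plan is to reduce the statement to two claims about a single connected component $C$ of the (eventually static) initiator network, then settle \emph{liveness} by the standard minimum-ID leader-election-plus-convergecast argument and \emph{safety} by ruling out a premature convergecast. The first step is to observe that the initiator network eventually stabilizes. By Lemma~\ref{lem:p1-termination} every initiator eventually determines its snapshot group and hence enters Phase~2, and an initiator enlarges its neighbour set $N$ only while it is still in Phase~1 (see the collision-handling code of Algorithm~\ref{alg:collision-handling}; the \mtag{Ack} handler has no explicit guard, but the matching entry of $Wait$ keeps \proc{CanDetermineSG} from finishing Phase~1 until the \mtag{Ack} has been processed, so $N$ is never genuinely changed after Phase~1). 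Hence there is a time after which every initiator of $C$ executes Phase~2 and $C$ is a fixed graph, and it suffices to prove, for each such $C$: \textbf{(Liveness)} every initiator of $C$ eventually calls \proc{FinishPhase2}, which is how it terminates Phase~2; and \textbf{(Safety)} no initiator of $C$ calls \proc{FinishPhase2} before every initiator of $C$ has entered Phase~2. An isolated initiator ($N_i=\emptyset$) satisfies both trivially, since it calls \proc{FinishPhase2} inside \proc{StartPhase2} and $C=\{p_i\}$; so assume $|C|\ge 2$.

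For \textbf{Liveness} I would run the textbook argument for minimum-ID leader election and convergecast over a fixed connected graph. Let $p_m=\min(C)$; since $rID$ is non-increasing and $p_m$ can never learn of a smaller ID, $rID_{p_m}=p_m$ holds forever. Its \mtag{Check} messages spread, and by induction on hop-distance from $p_m$ every initiator of $C$ eventually fixes $rID=p_m$, sets $dist$ to its distance from $p_m$, and sets $pID$ to a breadth-first parent; after this point no state-changing \mtag{Check} is produced, so the $Child$, $CK$ and $LT$ bookkeeping also settles. The leaves of the resulting tree then have $CK=N$ and $Child=\emptyset$, send \mtag{LocalTerm}, the convergecast climbs the tree, $p_m$ reaches $Child=CK=LT=N$ with $pID=p_m$ and broadcasts \mtag{GlobalTerm} downward, and every initiator of $C$ calls \proc{FinishPhase2} on receiving (or, for $p_m$, generating) it. The remark in Section~\ref{sec:phase2-overview} --- that a convergecast started with a wrong $rID$ is restarted at most $d$ times, $d$ the diameter of $C$ --- shows this all takes only finite time.

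For \textbf{Safety} the plan is to show the convergecast cannot finish until all of $C$ is in Phase~2, using three invariants: (a) a \mtag{Check} message, and hence a \mtag{LocalTerm} message (both guards that emit one require $CK_i=N_i$), is sent only by an initiator in Phase~2; (b) $CK_i$ only grows; (c) $rID_i$ is non-increasing and, whenever $rID_i$ equals an ID $q$, initiator $q$ has already entered Phase~2 (by induction along the chain of \mtag{Check}s that carried $q$). A non-isolated initiator calls \proc{FinishPhase2} only because of a \mtag{GlobalTerm}, which originates at a single initiator $p_r$ with $pID_r=p_r$ and $Child_r=CK_r=LT_r=N_r$. Suppose, for contradiction, $p_r$ emits \mtag{GlobalTerm} while some $p_u\in C$ has not entered Phase~2. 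By (a), $p_u$ has sent no \mtag{Check}, so every neighbour $p_v$ of $p_u$ has $p_u\notin CK_v$ and, by (b), $p_v$ never sends \mtag{LocalTerm}. I would then propagate this obstruction using the connectivity of $C$: from $LT_r=N_r$, every neighbour of $p_r$ did send it a \mtag{LocalTerm}, so by the send-guards (and since $Child$ only ever contains a node that previously reported the matching $pID$) each such neighbour had $CK=N$ and thus received a \mtag{Check} from every one of its neighbours; iterating this down the $Child$/$pID$ structure forces every initiator of $C$ to have received a \mtag{Check} from each neighbour, so in particular $p_u$ has sent one --- a contradiction. With Liveness this yields the theorem.

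The step I expect to be the main obstacle is this last iteration in the safety argument: rigorously showing that the leader condition $Child_r=CK_r=LT_r=N_r$ is simply unattainable while any initiator of $C$ still has a neighbour that has not sent a \mtag{Check}. The difficulty is that $Child$, $LT$ and $pID$ are updated asynchronously and may carry over stale information from earlier convergecast rounds (for example, an $LT$ entry is removed only when a later \mtag{Check} demotes the corresponding child, and a leaf-case \mtag{LocalTerm} can be sent when $Child=\emptyset$ even though that set will later grow), so one cannot simply extract a spanning tree of $C$ at the instant \mtag{GlobalTerm} is generated; the proof must instead track the monotonicity of $CK$ and the non-increasingness of $rID$ through the restarts to show the completed convergecast genuinely witnesses full participation of $C$.
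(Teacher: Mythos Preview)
Your safety argument---that the convergecast cannot complete while some initiator $p_u$ is still in Phase~1, because every neighbour of $p_u$ then has $p_u\notin CK$ and hence cannot emit \mtag{LocalTerm}, and this obstruction propagates to the would-be root---is exactly the mechanism of the paper's Lemma~\ref{lem:convergecast-safety}, which phrases the same propagation as an induction on the distance $x$ from the Phase-1 node $p_i$ to the minimum-ID initiator $p_s$. The paper's version is terser: it argues that a node in $N_i$ ``cannot receive \mtag{LocalTerm} from $p_i$'' and therefore does not send one onward, without explicitly invoking the $CK=N$ guard or confronting the stale-state issues you correctly flag (a leaf-case \mtag{LocalTerm} sent while $Child=\emptyset$ though $Child$ later grows, or $LT$ entries lingering through restarts). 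Your proposal is more rigorous on precisely this point, and the obstacle you single out at the end is real and is glossed over in the paper.

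The substantive difference is that you also supply a \emph{liveness} argument: eventual stabilization of the initiator network via Lemma~\ref{lem:p1-termination}, convergence of the minimum-ID BFS election once all initiators are in Phase~2, and completion of the convergecast and \mtag{GlobalTerm} broadcast. The paper does not prove this direction at all; it states only that ``it is sufficient for the correctness of Phase~2 to prove'' Lemma~\ref{lem:convergecast-safety} and then concludes Theorem~\ref{thm:p2-safety} from the safety half alone. So your plan is strictly more complete, but on the one claim the paper does argue, you and the paper use the same idea.
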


To prove the theorem, we will show that the convergecast in Phase 2 never terminates, if an initiaor executing Phase 1 exists.
The reason is as follows:
An initiator terminates Phase 2 when it receives a \mtag{GlobalTerm} message.
The root node of the spanning tree constructed on the initiator network sends \mtag{GlobalTerm} messages, when the node receives \mtag{LocalTerm} messages from all its neighbor nodes (they all are children of the node on the tree).
\mtag{LocalTerm} messages are sent by a convergecast from the leaf nodes of the tree to the root, when (1) a node received \mtag{Check} messages from all its neighbor nodes, and no neighbor node was a child of the node (or the node is a leaf), or (2) a node received \mtag{Check} messages from all its neighbor nodes and \mtag{LocalTerm} messages from all its child nodes.
Therefore, it is sufficient for the correctness of Phase 2 to prove the following lemma.

\begin{Lemma}
\label{lem:convergecast-safety}
The convergecast in Phase 2 never terminates, if an initiator node executing Phase 1 exists.
\end{Lemma}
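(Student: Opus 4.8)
The plan is a proof by contradiction built around the one mechanism highlighted just before the lemma: an initiator emits a \mtag{LocalTerm} message only once its set $CK$ has grown to equal its neighbour set $N$ in the initiator network (the guard at line \ref{algl:send-lt} and the analogous guard in the \mtag{LocalTerm} handler of Algorithm \ref{alg:phase2}), and the only message that ever enlarges $CK$ is a \mtag{Check} message, which is sent exclusively by an initiator that has already entered Phase 2 (in \proc{StartPhase2()} or in the \mtag{Check} handler). So I would assume, for contradiction, that the convergecast of some initiator network terminates --- i.e.\ its root $r$ eventually sends \mtag{GlobalTerm} --- while some initiator $p_z$ of that same network is still executing Phase 1; recall that sending \mtag{GlobalTerm} requires $Child_r = CK_r = LT_r = N_r$ and $pID_r = r$.

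First I would record the relevant monotonicity. Once an initiator $p_i$ enters Phase 2 its neighbour set $N_i$ is frozen: a \mtag{Link} received while $fin_i = \KwTrue$ is answered only by \mtag{Deny}, and \proc{CanDetermineSG()} can fire only with $Wait_i = \emptyset$, so $p_i$ has no outstanding \mtag{Link} and will receive no further \mtag{Ack}; hence nothing is ever added to $N_i$ after $p_i$ starts Phase 2, while $CK_i$ is non-decreasing. Since $p_z$ is in Phase 1 it has never sent a \mtag{Check}, so $p_z \notin CK_j$ for every $j$; yet if $p_z$ is a neighbour of a Phase-2 initiator $p_j$ then $p_z \in N_j$ permanently, so $CK_j \neq N_j$ forever ($p_z \in N_j \setminus CK_j$) and $p_j$ never sends a \mtag{LocalTerm}. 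Now I would invoke connectivity: the initiator network is connected and contains the Phase-2 node $r$ and the Phase-1 node $p_z$, so on a path $r = v_0, v_1, \dots, v_t = p_z$ there is a first index $j$ at which the path enters Phase 1; then $v_{j-1}$ is in Phase 2, $v_j$ is in Phase 1, the edge $(v_{j-1}, v_j)$ exists, and $v_0,\dots,v_{j-1}$ is a subpath of Phase-2 initiators, so $s := v_{j-1}$ is joined to $r$ through Phase-2 initiators and, by the previous sentence, $s$ never sends a \mtag{LocalTerm}. (If $j=1$ then $s=r$ and we are already done, since $CK_r \neq N_r$ forever contradicts $CK_r = N_r$.)

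It remains to derive a contradiction from ``$r$ sends \mtag{GlobalTerm} although $s$ never sends \mtag{LocalTerm}'', and this is the crux. The argument I would give: every initiator reachable from $r$ through Phase-2 initiators eventually settles on $rID = r$ (the minimum-ID gossip carried by \mtag{Check} is relayed by every Phase-2 node, and $r$, sending \mtag{GlobalTerm} with $pID_r = r$, must be the smallest ID in that region) and on a consistent parent pointer, so the convergecast toward $r$ spans exactly that region; for $r$ to collect \mtag{LocalTerm} from all of $N_r$ and call \proc{FinishPhase2()}, every non-root node of that spanning tree --- in particular $s$ --- must forward a \mtag{LocalTerm} to its parent, contradicting that $s$ never does. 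The delicate point, which I expect to be the main obstacle, is justifying that the convergecast tree has actually stabilised to span the whole Phase-2-reachable region by the time \mtag{GlobalTerm} is emitted, rather than terminating earlier on stale root information; this is precisely the quantitative face of the paper's remark that a ``wrong'' convergecast is restarted at most $d$ times. A fall-back that avoids the global-tree claim is a layered induction: from $Child_r = CK_r = LT_r = N_r$, each initiator that supplied one of those counted \mtag{LocalTerm}s had $CK = N$ when it sent it, hence permanently (its $N$ is frozen), hence has no Phase-1 neighbour; pushing ``$CK = N$'' down the active child pointers shows no node of the convergecast tree has a Phase-1 neighbour, after which the same connectivity observation (that $s$ lies on a Phase-2 path from $r$, so $s$ is in the tree) yields the contradiction, proving Lemma \ref{lem:convergecast-safety}.
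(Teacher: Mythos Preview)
Your core observation matches the paper's: a Phase-2 neighbour of a Phase-1 initiator can never achieve $CK = N$ (since the Phase-1 node sends no \mtag{Check}), hence can never emit \mtag{LocalTerm}. The paper, however, packages this much more compactly. It assumes without loss of generality a single Phase-1 node $p_i$, lets $p_s$ be the smallest-ID initiator (the only candidate root for \mtag{GlobalTerm}), and argues by distance from $p_i$: if $p_s \in N_i^1$ then $p_i$ itself never sends \mtag{LocalTerm} to $p_s$; if $p_s \in N_i^2$ then the distance-$1$ node on a shortest path never sends \mtag{LocalTerm} to $p_s$; and ``in the same way'' for $p_s \in N_i^x$. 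That is the whole proof.

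Your route is not wrong, and your monotonicity observations (freezing of $N$ once in Phase~2, $CK$ non-decreasing) are accurate and more careful than anything the paper states. But the paper's framing buys simplicity in two places. First, by naming $p_s$ as the global minimum up front, it avoids your step of arguing that an $r$ with $pID_r = r$ really is the smallest ID in the Phase-2 region. Second, by inducting \emph{outward from $p_i$} rather than inward from the root, it sidesteps the question you correctly flag as the crux---whether your boundary node $s$ actually sits in the convergecast tree. The paper's inductive step is admittedly informal (its phrase ``cannot receive \mtag{LocalTerm} from $p_i$'' should really read ``cannot receive \mtag{Check} from $p_i$'' at the base case, and the propagation for $x \geq 3$ tacitly assumes the shortest-path nodes lie on the convergecast path), so your fall-back layered induction from $r$ down the child pointers is arguably the more rigorous completion of the same idea.
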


\begin{proof}
We assume that only one node is executing Phase 1 in the initiator network, and let $p_i$ be the node.
We denote all nodes with distance $d$ from $p_i$ as $N^d_i$; e.g., $N^3_i$ is the set of all nodes with distance 3 from $p_i$ (trivially, $N^1_i = N_i$).
Let $p_s$ be the node that has the smallest ID in the initiator network.
To terminate the convergecast, $p_s$ must receive \mtag{LocalTerm} from all nodes in $N_s$ and become the root of the spanning tree.
Assuming that $p_s \in N_i$, the convergecast never terminates, because $p_i$ is executing Phase 1, and never sends \mtag{LocalTerm} to $p_s$.
Even if $p_s \in N^2_i$, the convergecast cannot terminate, because a node in $N_i$ that cannot receive \mtag{LocalTerm} from $p_i$ does not send \mtag{LocalTerm} to $p_s$.
In the same way, if $p_s \in N^x_i$ for some $x (\geq 1)$, the convergecast never terminates.
\end{proof}

If the convergecast does not terminate, which implies that an initiator is still executing Phase 1 and has not determined its snapshot group yet, no node can terminate Phase 2, because no \mtag{GlobalTerm} is sent.
Therefore, Theorem \ref{thm:p2-safety} holds.

\section{Evaluation}
\label{sec:evaluation}

In this section, we evaluate the performance of the proposed algorithm with CSS algorithm \cite{Kim2011,Kim2014}.
CSS algorithm is a representative of partial snapshot algorithms, as described in Section \ref{sec:related-works}, and the two algorithms have the same properties:
(1) The algorithms do not suspend an application execution on a distributed system while taking a snapshot, 
(2) the algorithms take partial snapshots (not snapshots of the entire system),
(3) the algorithms can take multiple snapshots concurrently, and
(4) the algorithms can handle dynamic network topology changes.
In addition, both algorithms are based on SSS algorithm \cite{Moriya2001,Moriya2005}.
For these reasons, CSS algorithm is a reasonable baseline for CPS algorithm.
We also analyze time and message complexities of CPS algorithm theoretically in Section \ref{sec:theoretical-performance}.

\subsection{CSS algorithm summary}
Before showing the simulation results, we briefly explain CSS algorithm.
For details, please refer the original paper \cite{Kim2014}.

The basic operation when no collision happens is almost the same as Phase 1 of CPS algorithm.
An initiator sends \mtag{Marker} messages to the nodes in its DS, and the nodes reply by sending \mtag{DSinfo} messages with their DS.
If the initiator receives DSes from all of its nodes, it sends \mtag{Fin} messages to let the nodes know the sets of nodes from which they must receive \mtag{Markers}, before terminating the snapshot algorithm.

In the algorithm, when a collision occurs, two collided initiators merge their snapshot groups into one group, and one of them becomes a main-initiator and the other becomes a sub-initiator.
The main-initiator manages all of the DSes of the nodes in the merged snapshot group and determines when the nodes terminate the snapshot algorithm.
The sub-initiator just forwards all the \mtag{DSinfo} and collision-related messages to its main-initiator, if it receives.
If another collision occurs and the main-initiator's snapshot group is merged into that of the merging initiator, the merged initiator resigns the main-initiator, and becomes a sub-initiator of the merging initiator.
These relations among a main-initiator and sub-initiators form a tree rooted at the main-initiator, and in this paper, we call it an \emph{initiator network}, like CPS algorithm.

Figure \ref{fig:css-collision-example} (a) illustrates the actual message flow of CSS algorithm when a collision happens.
When a node $p_x$ receives a collided \mtag{Marker} message from a neighbor node $p_y$, $p_x$ sends a \mtag{NewInit} message to its initiator.
This \mtag{NewInit} message is forwarded to the initiator's initiator if it exists.
This forwarding repeats until the \mtag{NewInit} message reaches the main-initiator.
The main-initiator $p_a$ sends an \mtag{Accept} message to $p_x$, to allow resolution of this collision.
Then, $p_x$ sends a \mtag{Combine} message to $p_y$, and this \mtag{Combine} message is also forwarded to the opponent main-initiator $p_b$.
When the opponent main-initiator $p_b$ receives the \mtag{Combine} message, the node compares its ID with ID of $p_a$.
If $p_a < p_b$, $p_b$ recognizes $p_a$ as its initiator, and sends an \mtag{InitInfo} message to $p_a$ with all of the information about the snapshot algorithm, including the set of all DSes that $p_b$ has ever received.
Otherwise, $p_b$ sends a \mtag{CompInit} message to $p_a$ and requests $p_a$ to become $p_b$'s sub-initiator, by considering $p_b$ as its main-initiator.
The collision is resolved with these message exchanges, and finally, one of the initiators $p_a$ or $p_b$ manages both snapshot groups.
When $p_b$ becomes the main-initiator by sending the \mtag{CompInit} message, the initiator network of this example can be illustrated as in Figure \ref{fig:css-collision-example} (b).

When another collision happens during this collision handling, the main initiator stores the \mtag{NewInit} message that provides the notification of the collision in a temporary message queue, and processes the message after the current collision is resolved.
In other words, CSS algorithm can handle at most one collision at the same time.
We think this drawback largely degrades the performance of CSS algorithm.

\begin{figure}[tbp]
	\centering
	\begin{tabular}{cc}
		\includegraphics[scale=0.55]{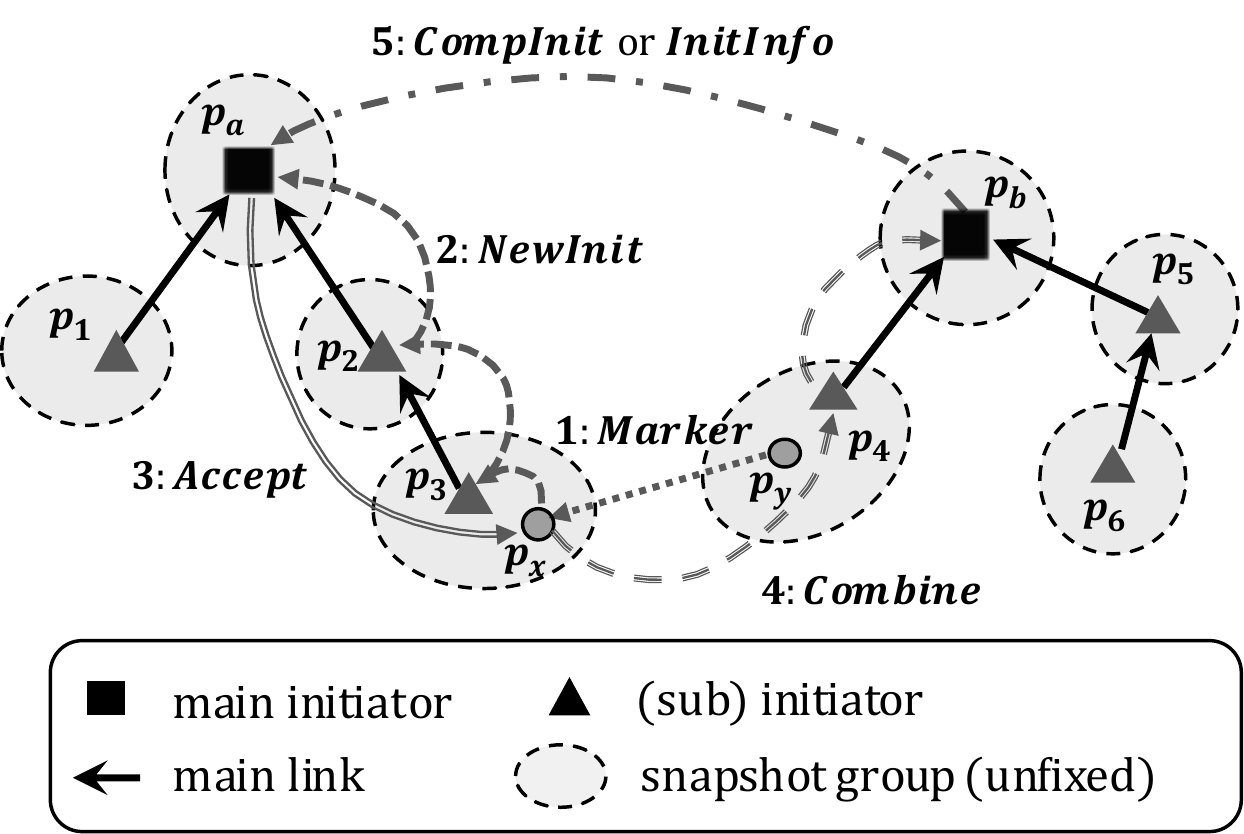} &
		\includegraphics[scale=0.35]{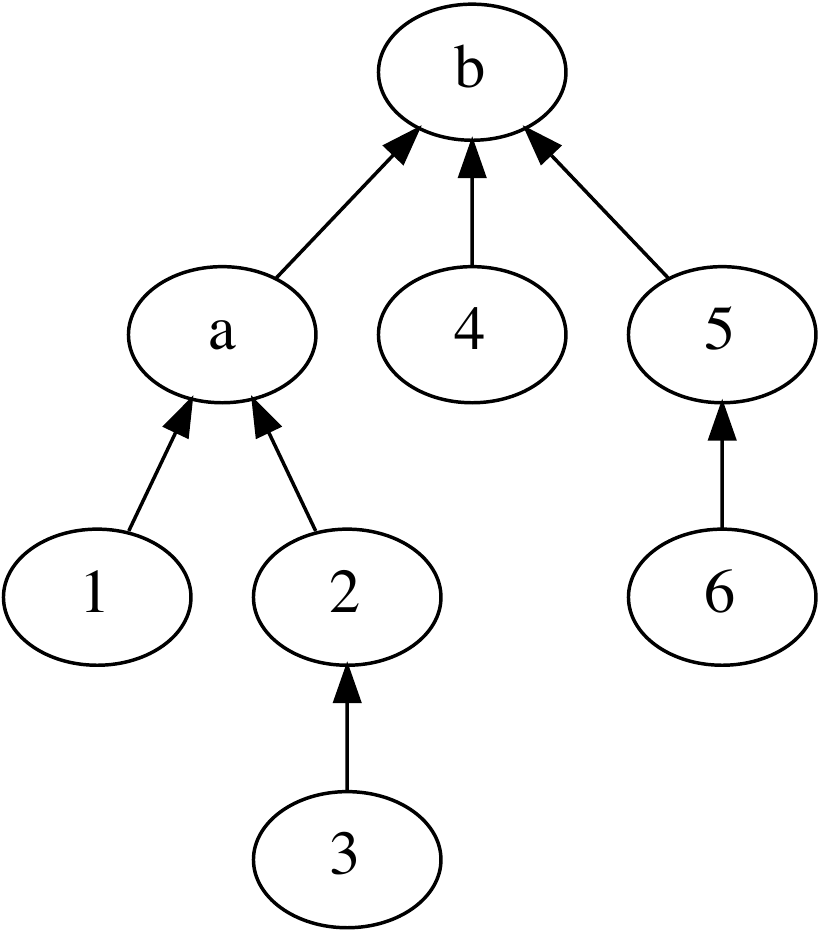} \\
		(a) Message flow when a collision occurs&
		(b) Initiator network\\
	\end{tabular}
	\caption{A collision-handling example of CSS algorithm}
	\label{fig:css-collision-example}
\end{figure}

In the simulation, we modified CSS algorithm slightly from the original, because we discovered during implementing the simulator that the original algorithm lacked some mechanisms that were necessary to take snapshots consistently.
First, we introduced \mtag{Out} messages, which was not described in CSS algorithm paper \cite{Kim2014}.
This helps a node (not an initiator) to shut down the current snapshot algorithm and join the next one.
Second, we altered it to forward \mtag{CompInit} and \mtag{InitInfo} messages to a main-initiator, in addition to \mtag{DSinfo} and \mtag{Combine}.
This was necessary to avoid deadlocking, when two or more collisions occur at the same time.

\subsection{Simulation settings}
\label{sec:simulation-settings}

The evaluation is performed by simulating node behaviors on a single computer.
Although both algorithms can take a snapshot on an asynchronous distributed system, for simplicity, a simulation is conducted in synchronous rounds.
In a round, all nodes receive messages, process them, and send new messages, which will be delivered in the next round.

Before each simulation of the algorithms, a communication-relation on nodes is generated, which has influence on the performance of the snapshot algorithms.
Although actual communication-relations depend on distributed applications to which snapshot algorithms are applied, we generate communication-relations randomly with probability C for every pair of nodes for simplicity.
After generating a communication-relation, we start simulation executions, one of each of the algorithms.
In the first round, each node becomes an initiator with probability F, and starts execution (by storing its state and sending \mtag{Markers} to its communication-related nodes) of the snapshot algorithms if it becomes an initiator.
We terminate the simulation when all the initiated snapshot algorithm instances terminate.

We have three parameters for the simulation: communication probability $C$, initiation probability $F$, and the number of nodes $N$.
As described, parameters $C$ and $F$ probabilistically determine the communication-relations and the snapshot algorithm initiations, respectively.
The larger $C$ generates denser communication-relations; thus, a (partial) snapshot group becomes larger.
The larger $F$ makes more nodes behave as initiators.
$N$ indicates the number of nodes in a simulation.
If $C$ or $F$ is large, a collision occurs more easily.

We evaluate these snapshot algorithms with three measures.
The first measure is the total number of messages sent in a simulation.
As described in Section \ref{sec:system-model}, a node can send a message to any other node if the node knows the destination node's ID.
Additionally, in this simulation, we assume that every node can send messages (including messages sent in Phase 2 of CPS algorithm, e.g., \mtag{Check}) to every other node in one hop.
In other words, we do not take into account any relaying message for this measure.
The second measure is the total number of rounds from the initiations of the snapshot algorithms until the termination of all snapshot algorithm instances.
The last measure is the number of messages by type.
This is a complement of the first measure, to discuss which parts of the algorithms dominate their communication complexity.
For this purpose, we classify the messages of both algorithms into four types, as shown in Table \ref{tab:msg-type}.
The normal-type messages are used to decide a snapshot group.
The collision-type messages are sent to resolve collisions that occurred during a snapshot algorithm.
The initiator network-type messages are sent between initiators, to coordinate their instances.
In CPS algorithm, this type of message is used in Phase 2, to synchronize their termination.
In contrast, CSS algorithm uses this type to forward collision-related messages from a sub-initiator to its main-initiator.

We run at least 100 simulations for each parameter setting and show the average of the simulations.

\begin{table}[tbp]
	\centering
	\caption{Message types.
		The initiator network-type messages of CSS algorithm (i.e., \mtag{DSinfo}, \mtag{NewInit}, etc.) are counted only when these messages are forwarded from a sub-initiator to its main-initiator.}
	\label{tab:msg-type}
	\small
	\begin{tabular}{lll}
\hline
\textbf{Type} & \textbf{CPS algorithm} & \textbf{CSS algorithm} \\ \hline\hline
Marker & \mtag{Marker} & \mtag{Marker} \\
Normal & \mtag{MyDS}, \mtag{Fin}, \mtag{Out} & \mtag{DSinfo}, \mtag{Fin}, \mtag{Out} \\
Collision & \mtag{NewInit}, \mtag{Link}, \mtag{Ack}, \mtag{Deny}, \mtag{Accept} & \mtag{NewInit}, \mtag{Accept}, \mtag{Combine}, \mtag{CompInit}, \mtag{InitInfo}\\
Initiator network & \mtag{Check}, \mtag{LocalTerm}, \mtag{GlobalTerm} & \mtag{DSinfo}, \mtag{NewInit}, \mtag{Combine}, \mtag{CompInit}, \mtag{InitInfo} \\
\hline
	\end{tabular}
\end{table}

\subsection{Simulation results}

First, we show the simulation results for different numbers of nodes $N$, in Figure \ref{fig:node-results}.
As Figure \ref{fig:node-results} (a) indicates, CPS algorithm can take snapshots with fewer messages than CSS algorithm.
For instance, when $N=200$, CPS algorithm reduced 44.1\% of messages from that of CSS algorithm.
Figure \ref{fig:node-results} (b) shows the running time of these algorithms (note that only this graph uses a logarithmic scale).
Although the running time of CPS algorithm was always less than 40 rounds, that of CSS algorithm drastically increased, and it took 34,966 rounds when $N=200$.
This huge difference came from the fact that CSS algorithm can handle at most one collision at the same time; thus, collisions must wait until the collision being processed (if it exists) is resolved.
In contrast, an initiator of CPS algorithm can handle multiple collisions concurrently, and then CPS algorithm drastically improves the total rounds.
We discuss later why the huge differences in the total numbers of messages and rounds exist.

The total number of collisions of both algorithms are displayed in Figure \ref{fig:node-results} (c).
Interestingly, CPS algorithm has more collisions than CSS algorithm, although CPS algorithm sends fewer messages than CSS algorithm.
This is because, CPS algorithm reprocesses a \mtag{Marker} message again when a node receives \mtag{Out} to resolve a collision consistently.
However, if the node is in another snapshot group than that of the \mtag{Marker} message, this reprocess leads to a collision.

Figure \ref{fig:node-results} (d) shows the total numbers of partial snapshot groups\footnote{These are equal to the numbers of initiators}, which are controlled by initiation probability $C$.
Both the algorithms have the same numbers because we provided the same seed of the pseudo random number generator (PRNG) in the simulator to each iteration of both the algorithms; we used $i$ as the seed for the $i$-th iteration of each algorithm.
Moreover, the initiation of each node is calculated with the PRNG in the same manner between the algorithms; thus, the same set of nodes become initiators for the same iteration.

Figure \ref{fig:node-results} (e) depicts the size of their initiator networks in the simulations.
Here, we define the initiator network size of CPS algorithm and CSS algorithm by the diameter of the initiator network and the depth of the initiator network tree, respectively, because these metrics can estimate the message processing load of the initiator network.
We can observe that the increasing ratio of CSS algorithm is larger than that of CPS algorithm.

Figures \ref{fig:node-results} (f) and (g) display the ratio of the message types, which were defined in Section \ref{sec:simulation-settings}, of the algorithms in their simulations.
The ratios of marker-type messages of the two algorithms are mostly the same, while those of collision- and initiator network-type messages are different.
In CPS algorithm, Initiator network-type messages are sent on the initiator network only to construct a breath-first-search (BFS) spanning tree, and to synchronize the termination of the initiators' instances.
However, CSS algorithm requires sub-initiators to forward every collision-related message, in which these forwarding messages are counted as initiator network-type messages, to their main-initiators.
This forwarding is a very heavy task in terms of the message counts.
In fact, 40.9\% of messages were sent on the initiator network of CSS algorithm when $N=200$, although the total numbers of collision-type messages are mostly the same for the algorithms.

\begin{figure}[tbp]
	\centering
	\begin{tabular}{c}
		\includegraphics[scale=0.77]{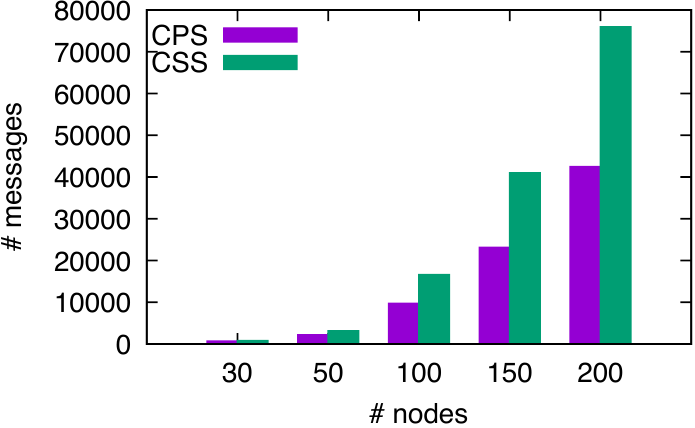} \\
		(a) Total messages \\
		\includegraphics[scale=0.77]{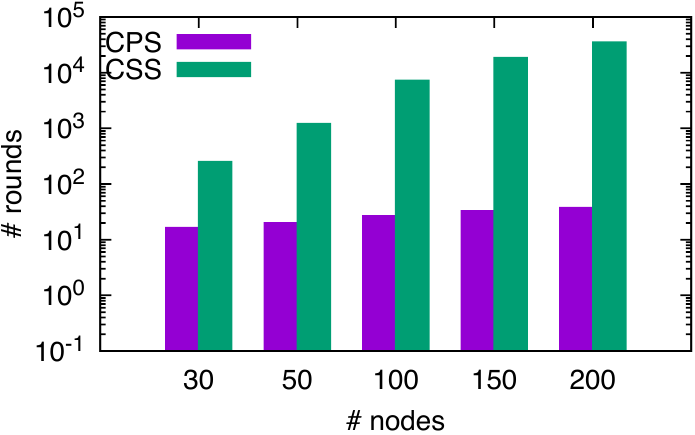} \\
		(b) Total rounds \\
		\includegraphics[scale=0.77]{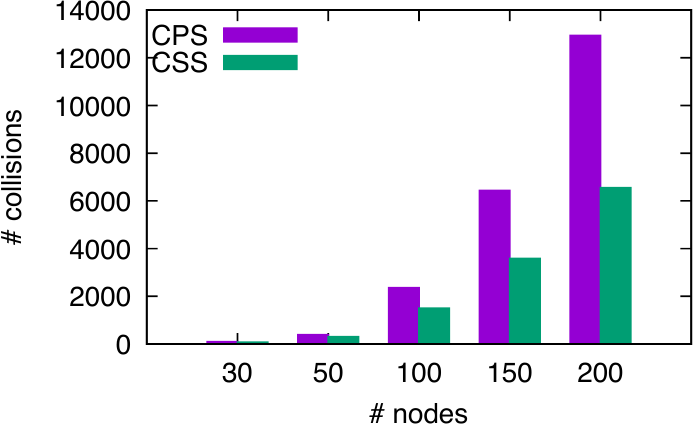} \\
		(c) Total collisions \\
	\end{tabular}
	\begin{tabular}{cc}
		\includegraphics[scale=0.77]{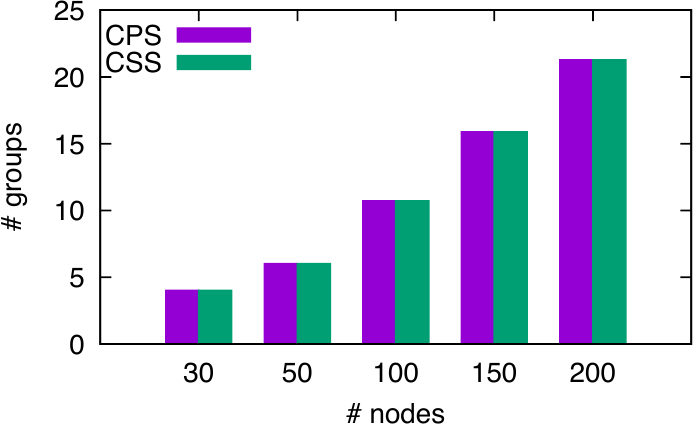} &
		\includegraphics[scale=0.77]{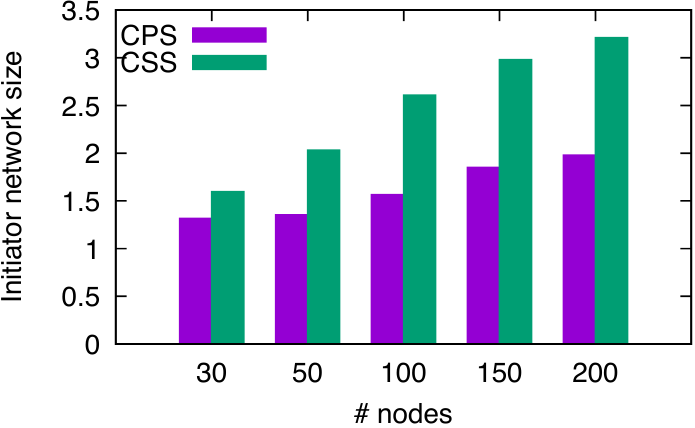} \\
		(d) Total partial snapshot groups & (e) Initiator network size \\
		\includegraphics[scale=0.77]{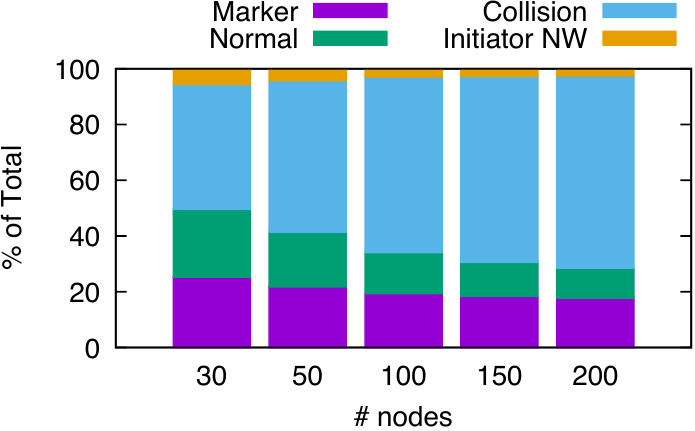} &
		\includegraphics[scale=0.77]{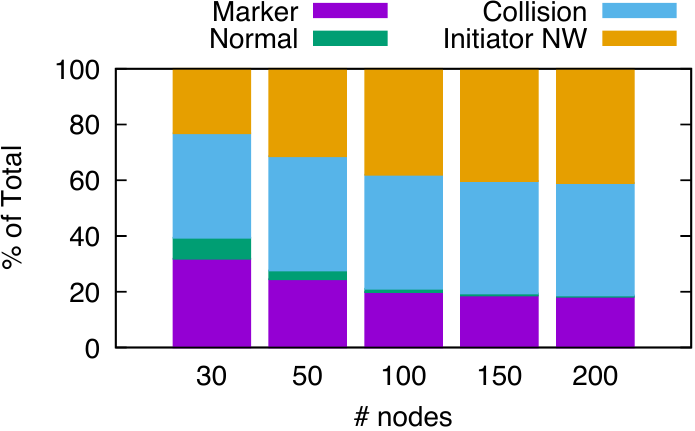} \\
		(f) Ratio of CPS messages & (g) Ratio of CSS messages \\
	\end{tabular}
	\caption{Simulation results for different numbers of nodes $N$.
		Communication probability $C$ and initiation probability $F$ are fixed at 10\%}
	\label{fig:node-results}
\end{figure}

To discuss why there exist such huge differences in the total numbers of messages and rounds between CPS algorithm and CSS algorithm, we examine their representative executions, and analyze their execution details.
As the representative, we chose an execution whose total number of messages is almost the same as the average of each algorithm when $N=200$, $C=10$, and $F=10$.

First, we see the BFS spanning tree on the initiator network of CPS algorithm in the execution, which is illustrated in Figure \ref{fig:cps-init-nw}.
There are 17 initiators in the network, and its topology is almost a complete graph (the network has a clique of size 16, and its diameter is two).
Therefore, the convergecast in Phase 2 with \mtag{Check} messages terminates at most two rounds after all the initiators finish Phase 1, and the root node can broadcast \mtag{GlobalTerm} immediately.
We can confirm this in Figure \ref{fig:node-results} (d), and this is not a special case for the execution.

\begin{figure}[tbp]
	\centering
	\includegraphics[scale=0.35]{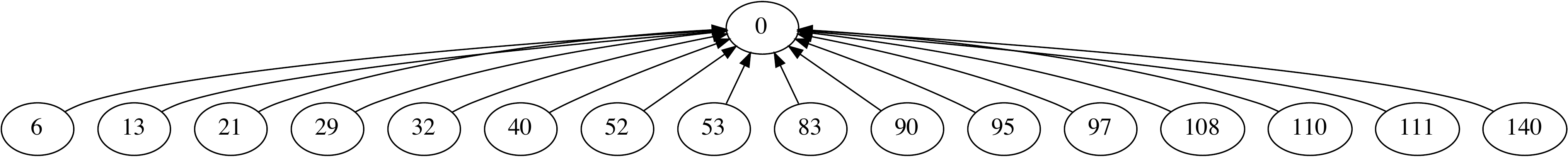}
	\caption{An initiator network example of CPS algorithm}
	\label{fig:cps-init-nw}
\end{figure}

The initiator network of CSS algorithm is depicted in Figure \ref{fig:css-init-nw}.
The tree has 16 nodes (initiators), and its depth is five, which means a collision-related message (e.g., \mtag{Combine} or \mtag{NewInit}) will forward four times at most.
To reveal the reason for the large number of messages and rounds of CSS algorithm, let us assume that a \mtag{Marker} message is sent from the snapshot group of initiator $p_{173}$ to the snapshot group of initiator $p_{171}$, and this tree has been constructed when this collision happens.
This is the worst case on the network.
First, the collided node in $p_{171}$'s snapshot group sends a \mtag{NewInit} message to $p_{171}$, and this message is forwarded four times to $p_0$; then $p_0$ sends an \mtag{Accept} message to $p_{171}$.
When $p_{171}$ receives this \mtag{Accept} message, it sends a \mtag{Combine} message to the colliding node in $p_{173}$'s snapshot group, and this \mtag{Combine} message is also forwarded four times to $p_{0}$.
\footnote{
	Remember that the initiator network in Fig.~\ref{fig:css-init-nw} has been constructed when this collision happens.
	This means that $p_0$ is the main-initiator of both $p_{173}$ and $p_{171}$.
	In other words, $p_0$ behaves as the main-initiator of the collided snapshot group and as that of the colliding snapshot group.
}
Then, $p_0$ receives the \mtag{Combine} message from $p_0$, and $p_0$ replies with an \mtag{InitInfo} message to $p_0$, because $p_0 \not< p_0$.
Finally, the collision between the initiators that share the same parent is resolved, thanks to 12 messages and 12 rounds (remember, the simulation is conducted by synchronous round, and it always takes a round to deliver a message).
Moreover, CSS algorithm must resolve collisions one by one.
Although this is a worst-case analysis, and typically, CSS algorithm can handle a collision with fewer messages and rounds, this is why CSS algorithm consumes a large number of messages and rounds.

\begin{figure}[tbp]
	\centering
	\includegraphics[scale=0.35]{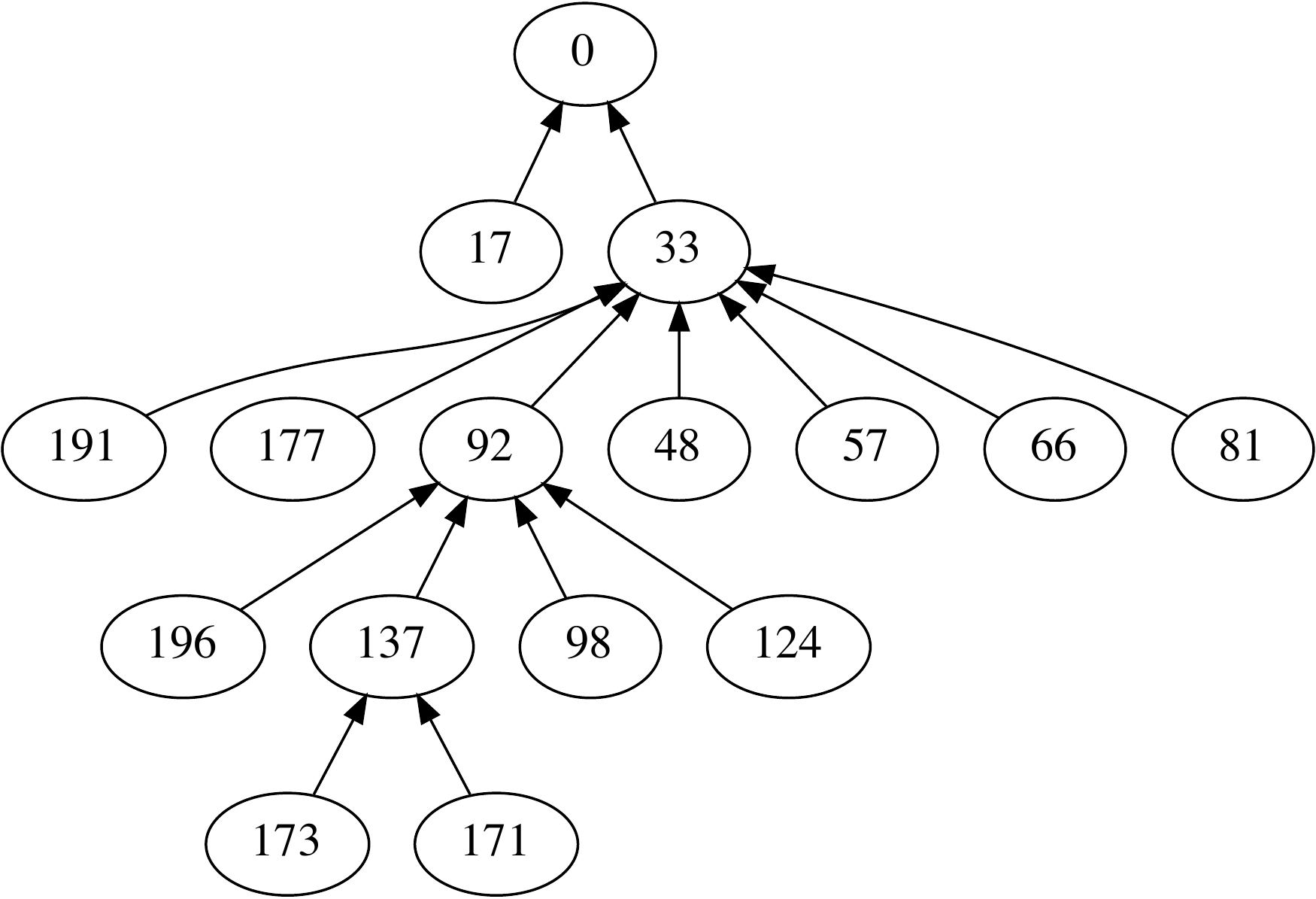}
	\caption{An initiator network example of CSS algorithm}
	\label{fig:css-init-nw}
\end{figure}

Figure \ref{fig:mc_histogram} shows the top 10 nodes that process the largest number of messages in the two executions of CSS algorithm and CPS algorithm.
Apparently, most of the messages in CSS algorithm are processed by two nodes ($p_0$ and $p_{33}$ in Figure \ref{fig:css-init-nw}).
This is unfavorable, because the nodes are exhausted by processing these messages, and can no longer run an application.
However, these tasks are distributed equally in CPS algorithm.

\begin{figure}[tbp]
	\centering
	\includegraphics[scale=0.77]{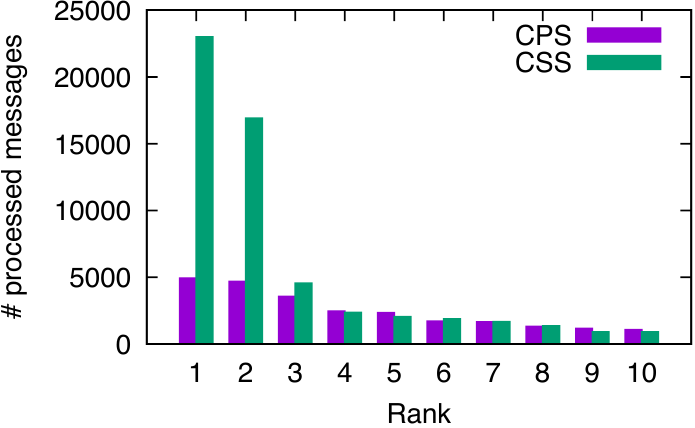}
	\caption{The total number of processed messages of the top 10 nodes in the simulation}
	\label{fig:mc_histogram}
\end{figure}

Finally, we observe the results for different communication probability $C$ and initiation probability $F$.
These results are shown in Figures \ref{fig:com-freq-results} and \ref{fig:init-freq-results}.
Similarly to the case for different number of $N$, CPS algorithm outperforms CSS algorithm in terms of the total numbers of messages and rounds.

\begin{figure}[tbp]
	\centering
	\begin{tabular}{cc}
		\includegraphics[scale=0.77]{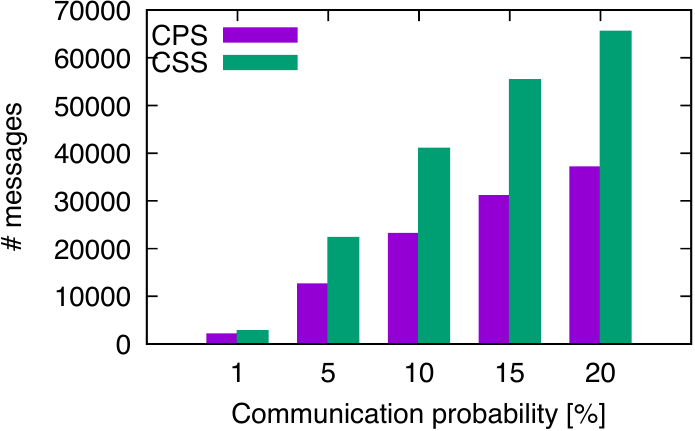} &
		\includegraphics[scale=0.77]{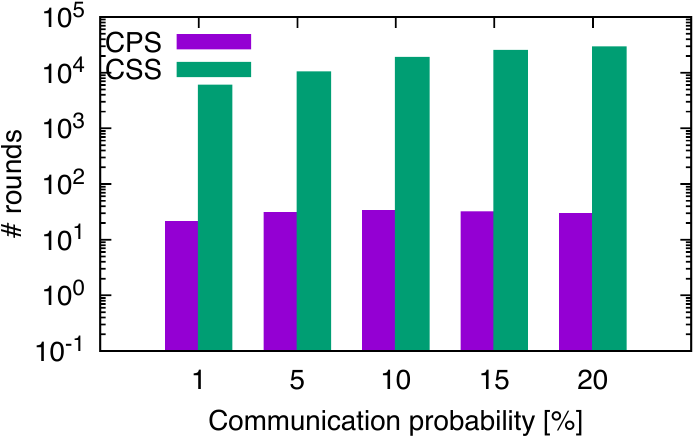} \\
		(a) Total messages & (b) Total rounds \\
	\end{tabular}
	\caption{Simulation results for different communication probability $C$.
		The number of nodes $N$ and initiation probability $F$ are fixed at 150 and 10\%, respectively}
	\label{fig:com-freq-results}
\end{figure}

\begin{figure}[tbp]
	\centering
	\begin{tabular}{cc}
		\includegraphics[scale=0.77]{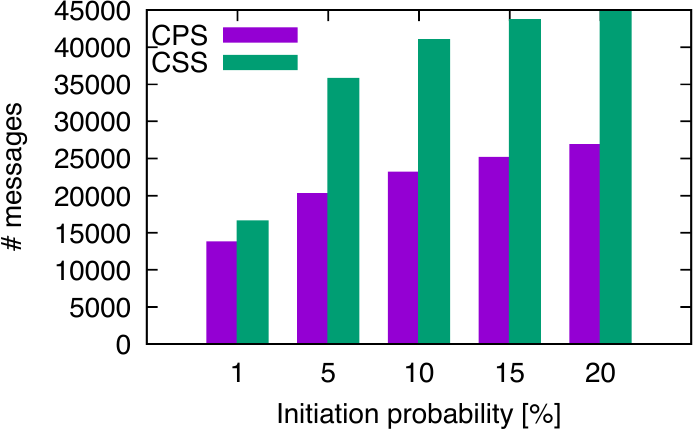} &
		\includegraphics[scale=0.77]{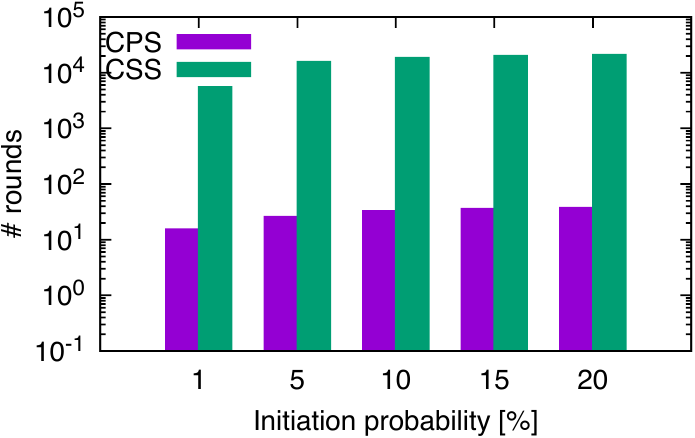} \\
		(a) Total messages & (b) Total rounds \\
	\end{tabular}
	\caption{Simulation results for different initiation probability $F$.
		The number of nodes $N$ and communication probability $C$ are fixed at 150 and 10\%, respectively}
	\label{fig:init-freq-results}
\end{figure}

\subsection{Theoretical Performance}
\label{sec:theoretical-performance}

Finally, we analyze the theoretical performance of CPS algorithm in terms of time and message complexities in the worst scenario where there are $n$ nodes in the system, and all of them invoke the algorithm.
We also assume the invocations happen at the same time for simplicity.

First, we analyse the time complexity with \emph{asynchronous rounds}.
In an asynchronous round, every node receives messages sent in the previous round, processes the messages, and sends new messages to other nodes.
We assume that communication-relations of all the nodes form a line graph of $n$ nodes, and one end of the graph has the smallest ID for the worst case of time complexity.
In this case, each initiator determines its partial snapshot group in five rounds\footnote{Each initiator sends messages in the following order: \mtag{Marker} (round 1), \mtag{MyDS} and \mtag{NewInit} (round 2), \mtag{Link} (round 3), \mtag{Ack} (round 4), and \mtag{Accept} (round 5).}, and enters Phase 2.
The leader election of Phase 2 takes $n-1$ rounds because it requires $n-1$ rounds to propagate the smallest ID from one end to the other end on the line graph.
With the same discussion, the relay transmissions of \mtag{LocalTerm} and \mtag{GlobalTerm} messages also takes $n-1$ rounds each.
After the termination of Phase 2, each initiator sends \mtag{Fin} messages and terminates CPS algorithm in the next round.
Therefore, CPS algorithm can take a snapshot within $3n+3$ rounds.

Next, we consider message complexity of CPS algorithm.
The worst case is a situation where all the initiators are communication-related each other.
In Phase 1 of the case, each node sends $n$ \mtag{Marker} messages and one \mtag{MyDS} message before collisions happen.
Since a collision requires four messages and $n$ collisions happen in this situation, $4n$ messages are sent to resolve the collisions in total.
In the leader election process of Phase 2, $m$ \mtag{Check} messages are sent in a round, and the election finish within $\Delta$ rounds, where $m$ is the number of edges in the initiator network, and $\Delta$ is the diameter of the network when Phase 2 terminates.
\mtag{LocalTerm} and \mtag{GlabalTerm} messages are sent once in every edge; then the total number of these messages is $m$.
Since we assume in Phase 1 that collisions happen between every two initiators, the initiator network is a complete graph of degree $n$, that is, $m=n(n-1)/2$ and $\Delta=1$.
Therefore, the message complexity of CPS algorithm is $\mathcal{O}(n^2)$.

\section{Conclusion}
\label{sec:conclusion}
We proposed a new partial snapshot algorithm named CPS algorithm to realize efficient checkpoint-rollback recovery in large-scale and dynamic distributed systems.
The proposed partial snapshot algorithm can be initiated concurrently by two or more initiators, and an overlay network among the initiators is constructed to guarantee the consistency of the snapshot obtained when some snapshot groups overlap.
CPS algorithm realizes termination detection to consistently terminate the algorithm instances that are initiated concurrently.

In a simulation, we confirmed that the proposed CPS algorithm outperforms the existing partial snapshot algorithm CSS in terms of the message and time complexities.
The simulation results showed that the message complexity of CPS algorithm is better than that of CSS algorithm for all the tested situations, e.g., 44.1\% better when the number of nodes in a distributed system is 200.
This improvement was mostly due to the effective use of the initiator network.
The time complexity was also drastically improved, because CPS algorithm can handle multiple collisions concurrently, while CSS algorithm must handle collisions sequentially.

\subsection*{Acknowledgements}
This work was supported by JSPS KAKENHI Grant Numbers JP16K16035, JP18K18029, and JP19H04085.
All the experiments in the paper were conducted with GNU Parallel \cite{Tange2018} on the supercomputer of ACCMS, Kyoto University.

\end{document}